\newtheorem{example}{Example}
\newtheorem{definition}{Definition}
\newtheorem{lemma}{Lemma}
\newtheorem{theorem}{Theorem}
\newtheorem{corollary}{Corollary}
\newtheorem{hypothesis}{Hypothesis}
\newcommand{\sref}[1]{Section~\ref{#1}}
\newcommand{\appref}[1]{Appendix~\ref{#1}}
\newcommand{\dref}[1]{Definition~\ref{#1}}
\newcommand{\cref}[1]{Corollary~\ref{#1}}
\newcommand{\thref}[1]{Theorem~\ref{#1}}
\newcommand{\lref}[1]{Lemma~\ref{#1}}
\newcommand{\ignore}[1]{}
\begin{document}

\title{A Graph Model for Opportunistic Network Coding}
\author{\IEEEauthorblockN{Sameh Sorour${^*}$, Neda Aboutorab$^\dagger$, Parastoo Sadeghi$^\dagger$, Tareq Y. Al-Naffouri${^{*\ddagger}}$, and Mohamed-Slim Alouini${^\ddagger}$}
\IEEEauthorblockA{${^*}$Electrical Engineering Department, King Fahd University of Petroleum and Minerals (KFUPM), Dhahran, Saudi Arabia\\
$^\dagger$Research School of Engineering, The Australian National University, Canberra, Australia\\$^\ddagger$CEMSE Division, King Abdullah University of Science and Technology (KAUST), Thuwal, Saudi Arabia\\
$^*$samehsorour@kfupm.edu.sa,$^\dagger$\{neda.aboutorab,parastoo.sadeghi\}@anu.edu.au,$^\ddagger$\{tareq.alnaffouri,slim.alouini\}@kaust.edu.sa}}

\maketitle

\begin{abstract}
Recent advancements in graph-based analysis and solutions of instantly decodable network coding (IDNC) trigger the interest to extend them to more complicated opportunistic network coding (ONC) scenarios, with limited increase in complexity.\ignore{ This requires the design of a simple IDNC-like graph model for these scenarios, which captures their properties and capacities.} In this paper, we design a simple IDNC-like graph model for a specific subclass of ONC, by introducing a more generalized definition of its vertices and the notion of \emph{vertex aggregation} in order to represent the storage of non-instantly-decodable packets in ONC. Based on this representation, we determine the set of pairwise vertex adjacency conditions that can populate this graph with edges so as to guarantee decodability or \emph{aggregation} for the vertices of each clique in this graph. We then develop the algorithmic procedures that can be applied on the designed graph model to optimize any performance metric for this ONC subclass. A case study on reducing the completion time shows that the proposed framework improves on the performance of IDNC and gets very close to the optimal performance.
\end{abstract}

\section{Introduction}\label{sec:intro}
Opportunistic network coding (ONC) was recently shown to provide fast and tailored packet delivery according to real-time demands of receivers and their side information \cite{4476183,4594999,Drinea2009,Gatzianas2010,Wang2010}\ignore{, as compared to block-based random linear network coding (RLNC)}. The philosophy behind ONC is to exploit the previously stored source packets and \emph{undecoded packet combinations} at the receivers in selecting subsequent combinations to maximize the network gains. By an undecoded packet combination for a given receiver, we mean a coded combination of source packets including more than one missing source packet at that receiver, thus preventing it from decoding a new source packet when received. Despite the benefits in storing such combinations for future decoding instances, exploring these benefits makes the proposed ONC algorithms in the literature highly complex and unscalable, as they need to examine an exploding number of sets and virtual queues to find suitable network codes. Moreover,\ignore{ unlike graph-based IDNC \cite{6030131},} there does not exist a general ONC framework to optimize any desired metric in a network.

On the other hand, many advancements have been achieved in analyzing, optimizing, and designing simple algorithms for a special subclass of ONC, namely instantly decodable network coding (IDNC), which does not allow any receiver to store undecoded packet combinations. These advancement were made possible thanks to the \emph{IDNC's simple graph model} \cite{TON10-CD,GC10,Li2011,6030131,ISIT12}. Despite the promising performance and easy scalability of the proposed IDNC algorithms\ignore{ and their easy scalability to large networks}, the fact of not storing and exploiting undecoded packet combinations in future decoding instances remains a clear limitation against exploring further performance improvements.

The above facts motivate the possibility of\ignore{ extending the recent IDNC advancements to} simplifying\ignore{ the analysis and} the algorithm design for general ONC, by creating a simple \emph{IDNC-like graph model} for it. Indeed, if such model is found, we can extend the IDNC graph-based analysis to ONC, and thus design\ignore{ similar} graph-based ONC algorithms with no or limited increase in complexity. For this ONC graph to be IDNC-like, it should be simple to construct using only \emph{pairwise vertex adjacency conditions (P-VACs)} that generate graph edges only based on pairwise vertex relations (as opposed to examining the sets and virtual queues of all the receivers, which is the complicating factor in current ONC algorithms). Yet, it should both capture the new properties that result from storing undecoded packet combinations and represent the possibility of utilizing them in future decoding instances.

In this paper, we focus on a subclass of ONC, which we will refer to as \emph{Order 2 ONC (O2-ONC)}. Unlike IDNC, receivers in O2-ONC are allowed to store and utilize \textit{Order-2 innovative packets (O2-IPs)}, which are initially defined for any given receiver as the coded packets including only two missing source packets\ignore{ that were not previously decoded} at this receiver (We will revisit this definition in Definition \ref{def:O2-IP}). But even with this one step extension from IDNC, the graph representation of O2-ONC is not trivial. In particular, one needs to properly cater for both the generation of vertices representing stored O2-IPs, and the vertex manipulations when further O2-IPs are stored. Proper P-VACs must also be derived between these newly defined vertices, such that the graph cliques represent valid packet combinations providing instantaneous benefits to their vertices. 

One contribution of this paper is the definition of a new vertex representation for the O2-ONC graph, the novel concept of \emph{vertex aggregation}, and the set of P-VACs that can populate this graph with edges, so as to guarantee decodability or \emph{aggregation} for all the vertices of each clique. Another main contribution is developing algorithmic procedures that can be applied on the designed graph model to optimize any O2-ONC performance metric. We finally present a case study on solving the completion time problem for O2-ONC using our proposed framework and compare the results to both IDNC and the optimal performance over all linear network codes.

We conclude this section by noting that our earlier work in \cite{ICC13-Neda} could only represent some of the information that stored O2-IPs bring, by adding extra vertices to the conventional IDNC graph, whereas this work re-builds the entire graph to capture all information about any stored O2-IP at the receivers.

\section{System Model}
We consider a wireless sender transmitting a frame $\mathcal{N}$ of source packets to a set $\mathcal{M}$ of receivers.\ignore{ over heterogenous erasure channels.Each receiver is interested in receiving a subset or all the packets of $\mathcal{B}$. The sender first transmits the packets in $\mathcal{N}$ without coding.} We assume that the receivers initially hold different (possibly overlapping) subsets of these packets and the sender aims to deliver the rest of these packets to them.  At any snapshot of time during this \emph{delivery process}, two sets of packets are attributed to each receiver $i$:
\begin{itemize}
\item The \textit{Has} set ($\mathcal{H}_i$) is defined as the set of source packets (that is, excluding O2-IPs) received by receiver $i$.
\item The \textit{Wants} set ($\mathcal{W}_i$) is defined as $\mathcal{N} \setminus \mathcal{H}_i$\ignore{ is defined as the set of source packets that are not received by $i$. This include the XORed packets in any stored O2-IP combination at $i$}.
\end{itemize}
Due to the diversity in this side information at different receivers, the sender employs ONC to deliver the missing packets. At each transmission, the sender must make a decision on which source packets to combine and send to benefit a certain set of receivers, given the O2-ONC constraint. In other words, the sender must select these coded combinations with the knowledge that any receiver will discard a packet if it is neither instantly decodable nor O2-innovative, which makes such combination of no benefit to this receiver.

In the rest of the paper, we will use $r$, $p$ and $\mathcal{P}$ to denote receivers, source packets and combinations of source packets, respectively. Also, a packet combination $\mathcal{P}$ can be interchangeably interpreted, according to the context, as either the set of source packets in this packet combination or the coded packet resulting from combining these source packets. Moreover, the sender is assumed to use either higher Galois field operations or triangular XORs (interested readers are refereed to \cite{6275780}) in the encoding processes, such that any receiver $r_i$ having $n$ O2-IPs $\mathcal{P}_1, \dots, \mathcal{P}_n$ can decode a set $x = \bigcup_{k=1}^n \left(\mathcal{P}_{k} \cap \mathcal{W}_i\right)$ of source packets if $|x| = n$. In other words, all received and stored O2-IPs are all assumed to be linearly independent from each other. This is a common assumption in most works on ONC algorithm design and can be guaranteed almost surely by employing an appropriate Galois field size or appropriate overhead in triangular XORs at the sender.

\ignore{
\section{Background on IDNC Graph}\label{sec:IDNC}

\subsection{Vertices and P-VACs}
The IDNC graph defines the set of all feasible instantly decodable packet combinations for the IDNC paradigm and determines the instantly decoding receivers of each of them. This graph $\mathcal{G}(\mathcal{V},\mathcal{E})$ is constructed by first inducing a vertex $v_{i,j}$ in $\mathcal{V}$ for each packet $j \in \mathcal{W}_i$, $\forall~i\in\mathcal{M}$. In other words, any vertex $v_{i,j}$ represents a lacked packet (either wanted or unwanted) $j$ from receiver $i$. We call the $i$ and $j$ indices of any vertex $v_{ij}$ as its receiver and packet elements, respectively.

The P-VACs in the IDNC graph are set as follows. Two vertices $v_{i,j}$ and $v_{kl}$ in $\mathcal{V}$ are connected by an edge in $\mathcal{E}$ if any one of the following conditions is true:
\begin{itemize}
\item C1: $j = l$ $\Rightarrow$ The two vertices are induced by the loss of the same packet $j$ by two different receivers $i$ and $k$.
\item C2: $j\in \mathcal{H}_k$ and $l \in \mathcal{H}_i$ $\Rightarrow$ The lacked packet of each vertex is in the Has set of the receiver\ignore{ that induced} of the other vertex.
\end{itemize}
It can be easily shown that the set of VACs are necessary and sufficient conditions for edge decodability. Since each vertex in the IDNC graph represent only one lacked packet, this edge decodability property among all vertices becomes a sufficient condition for clique decodability within the IDNC graph.

One direct way of constructing this graph is to generate $O(MN)$ vertices, representing the different packet loss cases from different receivers. To build the adjacency matrix of the graph, we need to check the adjacency conditions C1 and C2 for each pair of vertices to determine whether they should be connected with an edge. This means that we need a total of $O(M^2N^2)$ operations to build the adjacency matrix.

\subsection{A Deeper Look on the P-VACs}
A deeper look in the P-VACs of the IDNC graph can help inferring the two following properties of the adjacency conditions between the vertices of each pair of receivers:
\begin{itemize}
\item P1: If $j\in\mathcal{W}_k$, $v_{ij}$ cannot be adjacent to any vertex of receiver $k$, due to violation of C2, except for vertex $v_{kj}$ that satisfies C1. We can thus say that $v_{ij}$ is pairwise restricted (in terms of adjacency) by $v_{kj}$.
\item P2: If $j\in\mathcal{H}_k$, $v_{ij}$ can be adjacent to any vertex of receiver $k$ (induced from $\mathcal{W}_k$) according to C2, except for all vertices $v_{kl}$ for which $l\notin\mathcal{H}_i ~\Rightarrow~ l\in\mathcal{W}_i$ (i.e. all vertices restricted by other vertices of receiver $i$).
\end{itemize}
We call the first property P1 the vertex adjacency restriction property and the second property P2 as the unrestricted full bipartite adjacency property.

\subsection{Graph Construction Algorithm}
We can exploit the above shown properties P1 and P2 to simplify the construction of the algorithm. After constructing the vertices in $O(MN)$ operations, the adjacency matrix can be built as follows. For each pair of receivers, we can compute for each packet $j$:
\begin{equation}
\left|f_{ij}\right| + \left|f_{kj}\right| =
\begin{cases}
0 & \qquad\mbox{Both $i$ and $k$ have $j$}\\
1 & \qquad\mbox{Exactly one of $i$,$k$ has $j$}\\
2 & \qquad\mbox{Neither $i$ nor $k$ has $j$}
\end{cases}
\end{equation}
By this simple operation, we can identify whether $j$ is a common lacked packets (having a result of 2) or a non-common lacked packets (having a result 1). Thus, for each pair of receivers, we can identify restricted and unrestricted vertices using $N$ operations. We then connect every restricted vertex with its counterpart and then connect all unrestricted vertices from both sides resulting in a complete bipartite subgraph. Since we have to execute the above action for every pair of receivers without repetition, the edge set is built in $O(M^2N)$ operations. Thus, the overall complexity reduces to $O(MN + M^2N) = O(M^2N)$ operations.

}

\section{Background on IDNC Graph}\label{sec:IDNC}
The IDNC graph defines the set of all feasible instantly decodable packet combinations for the IDNC paradigm and determines the instantly decoding receivers of each of them. This graph $\mathcal{G}(\mathcal{V},\mathcal{E})$ is constructed by first inducing a vertex $v_{i,j}$ in $\mathcal{V}$ for each $p_j \in \mathcal{W}_i$, $\forall~r_i\in\mathcal{M}$. In other words, any vertex $v_{i,j}$ represents a wanted $p_j$ by $r_i$.\ignore{ We call the $i$ and $j$ indices of any vertex $v_{i,j}$ as its receiver and packet elements, respectively.}

The P-VACs in the IDNC graph are set as follows. Two vertices $v_{i,j}$ and $v_{k,l}$ in $\mathcal{V}$ are connected by an edge in $\mathcal{E}$ if any one of the following conditions is true:
\begin{itemize}
\item C1: $p_j = p_l$ $\Rightarrow$ The two vertices represent the need of $p_j$ by two different receivers $r_i$ and $r_k$.
\item C2: $p_j\in \mathcal{H}_k$ and $p_l \in \mathcal{H}_i$ $\Rightarrow$ The wanted packet of each vertex is in the Has set of the receiver\ignore{ that induced} of the other vertex.
\end{itemize}
It can be easily shown that the set of P-VACs are necessary and sufficient conditions for the possibility of serving all the demands of the vertices of any clique in the graph by one transmission, consisting of either one source packet or an XOR of several source packets that are identified by the clique vertices \cite{TON10-CD,GC10}.
It can also be easily shown that the construction of this graph needs $O(M^2N)$ operations, such that $M=|\mathcal{M}|$ and $N=|\mathcal{N}|$ \cite{TON10-CD}.

\section{O2-ONC Graph: Vertices} \label{sec:vertices}
Similar to the IDNC graph, each vertex of a given receiver in the O2-ONC graph should represent the need to receive a certain packet that benefits this receiver.\ignore{ Edges in the O2-ONC graph between any group of vertices forming a clique should point to the existence of a packet combination that can simultaneously serve the needs of all these vertices, thus simultaneously providing benefits to all the receivers inducing these vertices.} Since IDNC  does not allow the storage of undecoded packet combinations, the only benefit that the sender can achieve by serving $v_{i,j}$ is the delivery of $p_j$ to $r_i$, which makes $r_i$ one step closer to completing the reception of all its required source packets. By removing $v_{i,j}$ after its service, the number of vertices belonging to $r_i$ in the IDNC graph represents the number of needed packets until its service completion.

Now, in addition to the above notion, storing O2-IPs in O2-ONC extends the extent of receiver benefits. Indeed, when $r_i$, having $p_j$ and $p_l$ in $\mathcal{W}_i$, receives and stores a packet $p_j\oplus p_l$ at the $n$-th transmission, it requires the reception of one of these packets or their combination (either with other GF(4) coefficients or with triangular XOR) to decode both packets. So, after the reception of this packet combination $p_j\oplus p_l$ in the $n$-th transmission, $r_i$ becomes one step closer to its service completion. Indeed, if $r_i$ had to receive $|\mathcal{W}_i|$ packets to reach completion before the $n$-th transmission, it now needs to benefit from only $|\mathcal{W}_i|-1$ transmissions to reach completion.

This instance and this change of receiver requirements must be reflected in the O2-ONC graph. The packets $p_j$ and $p_l$ should not be represented by two vertices $v_{i,j}$ and $v_{i,l}$ any longer because they can be both decoded by one transmission. Thus, we define a new vertex representation $v_{i,j\cup l}$ that can represent this fact by only one vertex in the O2-ONC graph. We will refer to this notion as \emph{vertex aggregation} as it can be seen as the aggregation of the two vertices $v_{i,j}$ and $v_{i,l}$ into one bigger vertex. This description motivates the following generalization of the vertex definition in the O2-ONC graph.

\begin{definition}[Generalized Definition of Vertices] \label{def:generalized-vertex}
A vertex $v_{i,x}$ in the O2-ONC graph represents both:
\begin{itemize}
\item A set $x$ of packets, $|x|\geq 1$, desired by $r_i$.
\item A set of $|x|-1$ stored O2-IPs $\mathcal{P}_1, \dots, \mathcal{P}_{|x|-1}$ at $r_i$, such that all the following is true $\forall~k\in\{1, \dots, |x|-1\}$:
\begin{itemize}
\item $\mathcal{P}_k \subseteq x\cup\mathcal{H}_i$.
\item $\mathcal{P}_k\cap x \neq \emptyset$.
\item $\bigcup_{k=1}^n \left(\mathcal{P}_{k} \cap \mathcal{W}_i\right) = x$
\end{itemize}
In other words, $r_i$ has $|x|-1$ O2-IPs, each of which including a subset or all the source packets of $x$ and strictly NO source packets in $\mathcal{W}_i\setminus x$. Moreover, these packets must collectively include all the elements of $x$.
\end{itemize}
\end{definition}
We will call the $x$ index of vertex $v_{i,x}$ as the packet set of this vertex. We further define the dimension of a vertex $v_{i,x}$ as the cardinality of its packet set $x$ (i.e. $|x|$). It is important to mention here that \dref{def:generalized-vertex} is introduced only for theoretical explanations and will not be practically used to generate vertices in the O2-ONC graph, as this will result in a large complexity. As will be explained in \sref{sec:construction}, the O2-ONC graph luckily starts as an IDNC graph as no receiver stores any O2-IPs in the beginning of the delivery phase. Vertex aggregation will thus occur progressively after each transmission if needed using a much simpler procedure.

According to the above generalized definition of vertices in the O2-ONC graph, all the packets in the set $x$ of each vertex $v_{i,x}$ can be decoded by only one packet combination $\mathcal{P}$ when both following decodability conditions hold:
\begin{itemize}
\item $\mathcal{P}\subseteq x\cup\mathcal{H}_i$.
\item $\mathcal{P}\cap x \neq \emptyset$.
\end{itemize}
Indeed, when such packet is received by $r_i$, it can cancel the packets in $\mathcal{H}_i$, such that the remaining combination has packets that are all in $x$. Thus, $r_i$ will possess now $|x|$ independent linear equations in $|x|$ variables, which can be used to decode all the packets in $x$. Also note that the above definition includes the IDNC-type vertices when $|x|=1$ (which represents $|x|-1 = 0$ stored O2-IPs). The above concepts about generalized vertices are illustrated by Example \ref{ex:generalized-vertex} in \appref{app:examples}.

\ignore{
\begin{example}\label{ex:generalized-vertex}
A receiver $r_i$ with $\mathcal{H}_i = \{p_1,p_3,p_5,p_7\}$ and $\mathcal{W}_i = \{p_2,p_4,p_6,p_8\}$ is storing the following combinations:
\begin{itemize}
\item $\mathcal{P}_a = a_1 p_1 + a_2 p_2 + a_3 p_4$.
\item $\mathcal{P}_b = b_1 p_3 + b_2 p_4 + b_3 p_7 + b_4 p_8$.
\end{itemize}
Clearly, these 2 O2-IPs collectively include only 3 source packets from $\mathcal{W}_i$ and satisfy the conditions in \dref{def:generalized-vertex}. Consequently, these 3 source packets can be represented in the O2-ONC graph by only ONE vertex $v_{i,\{2,4,8\}}$. This representation means that $r_i$ can decode all three source packets if it receives only ONE of these source packets or ONE appropriate combination of any of  them that satisfy the above decodability conditions. For example, if $r_i$ receives $p_2$, $p_4$, $p_8$, or any combination of two or three or them (and possibly combined with other packets in $\mathcal{H}_i$, it will have 3 equations in 3 unknown source packets, which will result in their decoding.
\end{example}
}

The following lemma extends the above notion of vertex aggregation to this new generalized definition of vertices.
\begin{lemma}\label{lem:aggregation}
For any two vertices $v_{i,x}$ and $v_{i,y}$, $|x|,|y|\geq 1$, if receiver $r_i$ receives a packet combination $\mathcal{P}$, such that
\begin{itemize}
\item $\mathcal{P} \subseteq\{x\cup y \cup \mathcal{H}_i\}$
\item $\mathcal{P}\cap x \neq \emptyset$ and $\mathcal{P}\cap y \neq \emptyset$,
\ignore{\item $\mathcal{P} \subseteq\{y\cup \mathcal{H}_k\}$ and $\mathcal{P}\cap y \neq \emptyset$,}
\end{itemize}
then, these two vertices can be aggregated into one single vertex $v_{i,x\cup y}$.
\end{lemma}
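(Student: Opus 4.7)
The plan is to verify that, after $r_i$ receives the combination $\mathcal{P}$, the hypothetical merged vertex $v_{i,x\cup y}$ satisfies every requirement of Definition~\ref{def:generalized-vertex} with packet set $x\cup y$. Since $v_{i,x}$ and $v_{i,y}$ are distinct vertices of the same receiver, I would begin by appealing to the graph construction to take their packet sets to be disjoint, so $|x\cup y|=|x|+|y|$ and the target is to exhibit $|x|+|y|-1$ stored combinations satisfying the three sub-conditions of the definition for the enlarged packet set.

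First I would invoke Definition~\ref{def:generalized-vertex} twice to enumerate what $r_i$ already holds: $|x|-1$ combinations $\mathcal{P}^x_1,\dots,\mathcal{P}^x_{|x|-1}$ supported inside $x\cup\mathcal{H}_i$, each meeting $x$, and collectively covering $x$; plus the analogous $|y|-1$ combinations $\mathcal{P}^y_1,\dots,\mathcal{P}^y_{|y|-1}$ for $y$. Appending the newly received $\mathcal{P}$ gives exactly $|x|+|y|-1$ candidates, matching the count demanded by Definition~\ref{def:generalized-vertex} for $v_{i,x\cup y}$.

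Next I would tick off the three sub-conditions in turn. For the support condition, $\mathcal{P}^x_k\subseteq x\cup\mathcal{H}_i\subseteq(x\cup y)\cup\mathcal{H}_i$ and symmetrically for every $\mathcal{P}^y_k$, while $\mathcal{P}\subseteq(x\cup y)\cup\mathcal{H}_i$ is given by hypothesis. For non-empty intersection with $x\cup y$, each $\mathcal{P}^x_k$ meets $x$, each $\mathcal{P}^y_k$ meets $y$, and by hypothesis $\mathcal{P}$ meets both. Finally, $\bigcup_k(\mathcal{P}^x_k\cap\mathcal{W}_i)=x$ combined with $\bigcup_k(\mathcal{P}^y_k\cap\mathcal{W}_i)=y$ delivers the coverage $\bigcup(\cdot\cap\mathcal{W}_i)=x\cup y$.

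The hard step I anticipate is the implicit linear-independence requirement underlying Definition~\ref{def:generalized-vertex}: the $|x\cup y|-1$ stored combinations must be linearly independent for one further appropriate packet to finish decoding all $|x\cup y|$ wanted packets. After cancelling the known entries in $\mathcal{H}_i$, the combinations $\mathcal{P}^x_k$ project into the coordinate block indexed by $x$ and the $\mathcal{P}^y_k$ into the disjoint block indexed by $y$; each block is internally independent by the system-model assumption on stored O2-IPs stated in Section~II, and the two blocks are trivially independent of each other. The crucial role of the twin hypotheses $\mathcal{P}\cap x\neq\emptyset$ and $\mathcal{P}\cap y\neq\emptyset$ is to force $\mathcal{P}$ (after $\mathcal{H}_i$ cancellation) to have non-zero coordinates in both blocks, so it cannot collapse into the span of either side alone; combined with the system-model guarantee that every newly stored O2-IP is linearly independent of those already stored, this closes the argument and legitimises the aggregation into $v_{i,x\cup y}$.
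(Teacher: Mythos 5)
Your proposal is correct and follows essentially the same route as the paper's proof: both arguments cancel the $\mathcal{H}_i$ packets in $\mathcal{P}$ and count $(|x|-1)+(|y|-1)+1=|x|+|y|-1$ linearly independent equations in the $|x|+|y|$ unknowns of $x\cup y$, leaning on the system-model assumption that all stored O2-IPs are linearly independent, so that one further suitable reception decodes everything and Definition~\ref{def:generalized-vertex} is met with packet set $x\cup y$. Your explicit verification of the three sub-conditions of Definition~\ref{def:generalized-vertex} is a welcome addition that the paper leaves implicit; just note that having support in both coordinate blocks only rules out membership in each block's span separately, not in their direct sum, so the closing appeal to the global independence assumption is what actually carries that step --- exactly as in the paper.
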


\begin{proof}
The proof can be found in \appref{app:aggregation}
\end{proof}

The proof of \lref{lem:aggregation} sheds light onto an extended definition for the O2-IPs that matches the new vertex definition.

\begin{definition}[O2-IPs and Aggregating Packets] \label{def:O2-IP}
A packet combination $\mathcal{P}$ is said to be an O2-IP (or an aggregating packet) for $r_i$ if and only if $r_i$ has two vertices $v_{i,x}$ and $v_{i,y}$ such that:
\begin{itemize}
\item $\mathcal{P} \subseteq\{x\cup y\cup \mathcal{H}_i\}$
\item $\mathcal{P}\cap x \neq \emptyset$ and $\mathcal{P}\cap y \neq \emptyset$
\end{itemize}
In other words, the elements of $\mathcal{P}$ that are not in $\mathcal{H}_i$ must belong to the packet sets of only 2 vertices of receiver $r_i$.
\end{definition}
Note that we called these packets defined in \dref{def:O2-IP} as aggregating packets because when $r_i$ receives such packets, both vertices $v_{i,x}$ and $v_{i,y}$ will be aggregated into only one vertex $v_{i,x\cup y}$ as mandated by \lref{lem:aggregation}. This concept is illustrated by Example \ref{ex:aggregating-packets} in \appref{app:examples}.

\ignore{
\begin{example} \label{ex:aggregating-packets}
In the same scenario of Example \ref{ex:generalized-vertex}, if $r_i$ receives a coded packet $\mathcal{P}_c = c_1 p_2 + c_2 p_5 + c_3 p_6 + c_4 p_8$, it can be seen that, by excluding the Has set packets of $r_i$ from $\mathcal{P}_c$, the remaining source packets belong to either vertex $v_{i,\{2,4,8\}}$ or $v_{i,6}$. Despite the fact that $\mathcal{P}_c$ has more than 2 (in fact 3) packets from $\mathcal{W}_i$, it still is considered as an $O2-IP$ according to \dref{def:O2-IP} and its reception at $r_i$ results in aggregating the vertices $v_{i,\{2,4,8\}}$ and $v_{i,6}$ into one vertex $v_{i,\{2,4,6,8\}}$. Indeed, $r_i$ now needs to receive only ONE of these source packets or ONE combination of them (again possibly with other packets from $\mathcal{H}_i$), which can be used along with $\mathcal{P}_a$, $\mathcal{P}_b$ and $\mathcal{P}_c$ to decode all of these four source packets. Thus, they must be all represented by ONE vertex in the O2-ONC graph.
\end{example}
}

\ignore{
\subsection{Extension of the Notion of Receiver Benefit}
Given the previous definitions and facts, we can infer a more general notion of receiver benefit. In general, any packet combination $\mathcal{P}$ in O2-ONC can be one of four possibilities for receiver $i$:
\begin{enumerate}
\item Non-innovative packet: $\mathcal{P}\subseteq \mathcal{H}_i$.
\item Instantly decodable packet: There exists one vertex $v_{i,x}$ for which $\mathcal{P}\subseteq \{x\cup\mathcal{H}_i\}$ and $\mathcal{P}\cap x \neq \emptyset$.
\item Aggregating packet: There exists only two vertices $v_{i,x}$ and $v_{i,y}$ for which $\mathcal{P}$ satisfies \label{def:O2-IP}.
\item Discardable packet: $\mathcal{P}$ does not satisfy all the previous conditions.
\end{enumerate}
Clearly, only the second and third types of packets are the ones that can achieve benefit to receiver $i$, whereas the first and forth types are considered as wasted transmission from $i$'s viewpoint.

\ignore{Now, defining $\tau(\mathcal{P})$ that could simultaneously benefit from the reception of packet $\mathcal{P}$, the following lemma and corollary show how this extension of the notion of receiver benefit can be employed to provide simultaneous service to a larger number of receivers.}

Now, defining $\delta(\mathcal{P})$ and $\beta(\mathcal{P}$ as the set of vertices for which $\mathcal{P}$ is instantly decodable and benefiting (i.e. either instantly decodable or aggregating), respectively, the following lemma and corollary show how this extension of the notion of receiver benefit can be employed to provide simultaneous service to a larger number of receivers.

\begin{lemma}\label{lem:extended-benefit}
For a packet combination $\mathcal{P}$ and for a packet $j\notin\mathcal{P}$, we have $\delta(\mathcal{P}) \subseteq \beta(\mathcal{P}\cup j)$.
\end{lemma}
\begin{proof}
\ignore{For any given vertices $v_{i,x} \in \delta(\mathcal{P})$, $j$ can be in one of two sets:}
We have two cases for $j$ with respect to any vertex $v_{i,x} \in \mathcal{D}(\mathcal{P})$\\
\textbf{Case 1}: $j \in \mathcal{H}_i$:\\
Thus, $\mathcal{P}\cup j$ will still be instantly decodable for $v_{i,x}$. Indeed, when $i$ receives a combination $\mathcal{P}\cup j$, it can eliminate $j$ and obtain $\mathcal{P}$ which instantly decodable for $v_{i,x}$.\\
\textbf{Case 2}$: j \in \mathcal{W}_i$: \\
From the definition of the instantly decodable packet, we know that $\mathcal{P}\subseteq x\cup\mathcal{H}_i$ and $\mathcal{P}\cap x \neq \emptyset$. At the same time, $j$ must belong to another vertex $v_{i,y}$ (i.e. $j\in y$) . Thus, we have:
    \begin{itemize}
    \item $(\mathcal{P}\cup j)\subseteq \{x\cup y \cup \mathcal{H}_i\}$.
    \item $(\mathcal{P}\cup i)\cap x \neq \emptyset$ and $(\mathcal{P}\cup i)\cap y \neq \emptyset$.
    \end{itemize}
From \lref{lem:aggregation}, vertices $v_{i,x}$ and $v_{i,y}$ can be aggregated when receiver $i$ receives $\mathcal{P}\cup j$, which makes it an aggregating packet.

Thus, in both cases, receiver $i$ benefits from this packet $\mathcal{P}\cup j$. The lemma follows from the fact that the above applies to all vertices of $\delta(\mathcal{P})$ and thus to all their inducing receivers. Set equality will not hold only if $\mathcal{P}\cup j$ becomes instantly decodable for at least one other receiver $k$ such that $\mathcal{P}\subseteq \mathcal{H}_k$ and $j\in\mathcal{W}_k$.
\end{proof}

\begin{corollary}\label{cor:extended-benefit}
Sending a combination of two source packets will benefit the same set or a super-set of the receivers that will benefit from the transmission of each of them independently.
\end{corollary}
\begin{proof}
The reception of one source packet $j$ will surely be either an non-innovative (if $j\in\mathcal{H}_i$) or instantly decodable (if $j\in\mathcal{W}_i$) for all the receivers. For any receiver $i$ to which $j$ is non-innovative, sending the combination $j\oplus l$ instead will make this new coded packet either still non-innovative (if $l$ is also in $\mathcal{H}_i$) or instantly decodable (if $l\in\mathcal{W}_i$). For any receiver $k$ to which $j$ is instantly decodable, it follows from \lref{lem:extended-benefit} that $\delta(j)\subseteq\beta(j\cup l)$ for any $l\neq j$. The above analysis applies if we interchange $l$ with $j$ and the corollary follows.
\end{proof}

The last corollary implies that there is no point in sending one source packet $j$ unless this source packet is missing at all the receivers. If not, then combining this source packet with any other source packet $l$ some other set of receivers in $\mathcal{R}\setminus \tau(j)$ will definitely benefit more receivers including those which would have benefited from sending $j$ only. This fact will be used as a guideline later, when we describe the packet selection algorithms based on our O2-ONC graph.
}

\section{O2-ONC Graph: Edges and P-VACs} \label{sec:edges}
As in the IDNC graph, an edge between any two vertices in the O2-ONC graph should reflect a possibility of simultaneous benefit for their receivers. More generally, for any group of vertices forming a clique, there must exist a packet combination that can simultaneously benefit all the receivers inducing these vertices. We will refer to this property as the clique benefit property and to the cliques that satisfy this property as \emph{proper cliques}.

\begin{definition}[Proper Cliques and Proper $K$-Cliques]\label{def:proper-cliques}
A proper clique in the O2-ONC graph is a clique for which there exists at least one packet combination that can either decode (hence remove) or further aggregate each and every vertex of that clique. A proper $K$-clique is a proper clique of size $K$ (i.e. a proper clique that consists of $K$ vertices).
\end{definition}
\ignore{
\begin{figure}[t]
\centering
\includegraphics[width=0.6\linewidth]{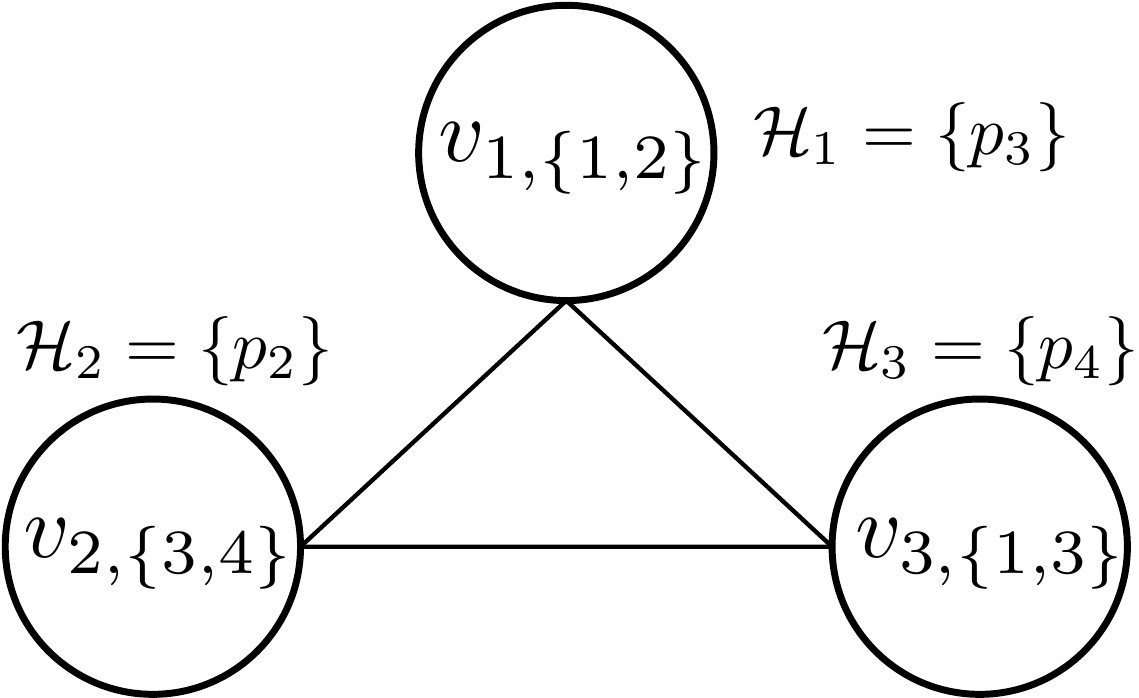}\\
\caption{Example of a proper clique between 3 vertices in the O2-ONC graph (The rest of the vertices are hidden to emphasize this clique). Each two vertices satisfy either of the Simple P-VACs in \dref{def:Simple-PVACs}. A coding combination $\{p_1,p_3\}$ or $\{p_1,p_4\}$ or $\{p_2,p_3\}$ or $\{p_2,p_4\}$ will definitely either decode each of these vertices or will aggregate it with another vertex of the same receiver. For example, $\{p_1,p_3\}$ will decode vertices $v_{1,\{1,2\}}$ and  $v_{3,\{1,3\}}$, and will aggregate $v_{2,\{3,4\}}$ and $v_{2,1}$ into one vertex $v_{2,\{1,3,4\}}$.}
\label{fig:proper-clique}
\end{figure}
Fig. \ref{fig:proper-clique} shows an example of a proper 3-clique in the shown scenario. We can easily see that all combinations $\{p_1,p_3\}$ or $\{p_1,p_4\}$ or $\{p_2,p_3\}$ or $\{p_2,p_4\}$ will definitely either decode each of these vertices or will aggregate it with another vertex of the same receiver.
}
Example \ref{ex:proper-clique} in \appref{app:examples} depicts an illustration of a proper 3-clique. As mentioned in \sref{sec:intro}, edges should be generated through P-VACs, which consider only pairwise relations between any two vertices it may connect. Similar to IDNC, this property is important to ensure that we do not significantly increase the graph construction complexity. At the same time, these P-VACs must be designed such that they can guarantee that all generated cliques in the O2-ONC graph are proper.

\subsection{Simple P-VACs}
By first looking at the edge level (i.e. 2-cliques), it is not difficult to show that they can be made proper if the following set of Simple P-VACs are defined as follows:
\begin{definition}[Simple P-VACs]\label{def:Simple-PVACs}
\ignore{ The Simple P-VACs is defined by the following two adjacency conditions:}
Two vertices $v_{i,x}$ and $v_{k,y}$ are adjacent in the O2-ONC graph according to one of the two following benefiting conditions (BC):
\begin{itemize}
\item BC1: If $x\cap y \neq \emptyset ~ \Rightarrow ~ v_{i,x}$ is set adjacent to $v_{k,y}$ with no other conditions.
\item BC2: If $x\cap y = \emptyset ~ \Rightarrow ~ v_{i,x}$ is set adjacent to $v_{k,y}$ only if $x\cap \mathcal{H}_k \neq \emptyset$ AND $y\cap \mathcal{H}_i \neq \emptyset$
\end{itemize}
\end{definition}
\ignore{Every two vertices in the example of Fig. \ref{fig:proper-clique} satisfy one of the above simple P-VACs. Indeed, BC1 is satisfied between vertices $v_{1,\{1,2\}}$ and $v_{3,\{1,3\}}$ as well as between $v_{2,\{3,4\}}$ and $v_{3,\{1,3\}}$. BC2 is also satisfied between $v_{1,\{1,2\}}$ and $v_{2,\{3,4\}}$. Consequently,} Example \ref{ex:proper-clique} in \appref{app:examples} shows that the above conditions can generate proper 2-cliques and proper 3-cliques. However, the following theorem shows the limitations on these P-VACs in generating proper $K$-cliques for $K>3$.\ignore{ We omit the proof of this theorem due to space limitation and to not disturb the main flow of the paper.}
\ignore{
The next theorem shows that these simple P-VACs always generate proper $K$-cliques for $K \leq 3$. Note that this can also be seen as these simple P-VACs guaranteeing the clique benefit property for the 2 and 3 receivers cases.
\begin{theorem}\label{th:simple-PVACs}
The simple P-VACs always generate proper 2-cliques and 3-cliques.
\end{theorem}
\begin{proof}
For the case of $K = 2$:
\begin{itemize}
\item If BC1 is true for any two vertices $v_{i,x}$ and $v_{i,x}$, then $\mathcal{P} = p_{x\cap y}$ will be instantly decodable for both receivers $i$ and $k$ ($p_s$ being any source packets or combination of the source packets in sets $s$. Unless otherwise stated, we will assume that $p_s$ is only one source packet in set $s$).
\item If BC2 is true for any two two vertices $v_{i,x}$ and $v_{i,x}$, then $\mathcal{P} = p_{x\cap\mathcal{H}_k} \oplus p_{y\cap\mathcal{H}_i}$ will be instantly decodable for both receivers $i$ and $k$.
\end{itemize}
For the case of $K = 3$:
\begin{itemize}
\item If BC1 is true for only two vertices $v_{i,x}$ and $v_{k,y}$, then from our analysis of the two receivers case, there exists an instantly decodable combination $\mathcal{P} = p_{x\cap y}$ for these two vertices. Thus, from Corollary \ref{cor:extended-benefit}, if another source packet $j\in z$ of a third vertex $v_{m,z}$ is combined with $p_{x\cap y}$, the resulting combination will benefit all three vertices.
\item If BC2 is true pairwise between three vertices $v_{i,x}$, $v_{k,y}$ and $v_{m,z}$, consider the following combination:
    \begin{equation}
    \mathcal{P} = p_{x\cap \mathcal{H}_k} \oplus p_{y\cap \mathcal{H}_m} \oplus p_{z\cap\mathcal{H}_i}
    \end{equation}
    From receiver $i$'s perspective, it can eliminate $p_{z\cap\mathcal{H}_i}$ (because it has it) and will remain a combination $p_{x\cap \mathcal{H}_k} \oplus p_{y\cap \mathcal{H}_m}$. If $p_{y\cap \mathcal{H}_m} \in \mathcal{H}_i$, then this combination will be instantly decodable for vertex $v_{i,x}$. If not, than there exists another vertex $v_{i,w}$ such that $p_{y\cap \mathcal{H}_m}\in t$. Thus from \lref{lem:aggregation}, this will results in the aggregation of $v_{i,x}$ and $v_{i,w}$ into one vertex $v_{i,x\cup w}$.
\end{itemize}
\end{proof}
\ignore{Note that the above two P-VACs can be grouped into one expression. Any two vertices $v_{i,x}$ and $v_{k,y}$ are adjacent if and only if $x\cap \{y\cup \mathcal{H}_k\} \neq \emptyset$ and $y\cap \{x\cup\mathcal{H}_i\} \neq \emptyset$.}

\begin{theorem}
For $d_{max} < 4$,  the maximum size of a proper clique formed by vertices only satisfying BC1 is equal to $M$, which means that it is un-constrained by $d_{max}$. For $d_{max}\geq 4$, the maximum size of a proper clique formed by vertices only satisfying BC1 is equal to 4.
\end{theorem}
}

\begin{theorem}\label{th:BC-bound}
The Simple P-VACs guarantee the formation of $K$-proper cliques for $K\leq 3$. Moreover, considering Condition BC1 only, proper K-cliques can be guaranteed for $K\leq 4$.
\end{theorem}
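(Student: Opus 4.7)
The plan is to prove both parts by explicit construction: for each clique configuration, exhibit a candidate packet combination $\mathcal{P}$ and verify, via Definition~\ref{def:generalized-vertex} and Lemma~\ref{lem:aggregation}, that $\mathcal{P}$ either decodes or aggregates every vertex of the clique. I would begin with the $K=2$ base case to isolate a trichotomy that drives every later case: for any summand $s$ of $\mathcal{P}$ and any clique vertex $v_{j,w}$, either $s\in\mathcal{H}_j$ (cancellable), or $s\in w$ (reinforces decodability of $v_{j,w}$), or $s$ lies in a unique vertex $v_{j,y}$ of $r_j$ with $y\neq w$ (enabling aggregation via Lemma~\ref{lem:aggregation}). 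BC1 yields a common packet $p\in x\cap y$ that is instantly decodable at both vertices, while BC2 yields the XOR $p\oplus q$ with $p\in x\cap\mathcal{H}_k$ and $q\in y\cap\mathcal{H}_i$, which each receiver decodes by Has-cancellation.

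For $K=3$, I would split into subcases on the number of BC1 edges. Whenever some pair, say $v_{i_1,x_1}$ and $v_{i_2,x_2}$, is BC1-adjacent, I pick $p\in x_1\cap x_2$ and a packet $q\in x_3$ and send $\mathcal{P}=p\oplus q$; the two summands hit the two corresponding target vertices directly, and the trichotomy handles the remaining summand at each of the three receivers, using the BC2 disjointness clause on any remaining pair to cap the worst case at aggregation. For the all-BC2 subcase I would employ the cyclic three-summand XOR $\mathcal{P}=a\oplus b\oplus c$ with $a\in x_1\cap\mathcal{H}_{i_2}$, $b\in x_2\cap\mathcal{H}_{i_3}$, and $c\in x_3\cap\mathcal{H}_{i_1}$: at each receiver the construction forces exactly one summand into the Has set and exactly one into that vertex's own packet set, while the third---kept outside the target by BC2's pairwise disjointness $x_a\cap x_b=\emptyset$---can at worst sit in a single other vertex and trigger aggregation.

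For $K=4$ under BC1 only, I would pair the four vertices as $\{1,2\}$ and $\{3,4\}$, pick $p_{12}\in x_1\cap x_2$ and $p_{34}\in x_3\cap x_4$, and transmit $\mathcal{P}=p_{12}\oplus p_{34}$. At each of the four vertices one summand is in its own packet set and the other is dispatched by the trichotomy; no disjointness clause is needed here because even the ``other vertex of the same receiver'' branch still yields aggregation and hence proper benefit.

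The main obstacle is the all-BC2 $K=3$ subcase: it is the only configuration requiring a three-summand combination, and at each of the three receivers one must verify that exactly one summand is Has-cancelled, exactly one serves the target vertex, and the third (if wanted) lands in at most one other vertex of that receiver, so that the two-vertex scope of Lemma~\ref{lem:aggregation} suffices. The deliberate choice of $a,b,c$ from three distinct Has/target intersections is precisely what prevents two summands from demanding aggregation into two different other vertices at any receiver, which would otherwise overflow the lemma and break the proper-clique property.
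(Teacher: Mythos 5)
Your proposal is correct and follows essentially the same route as the paper's proof in Appendix~C: the same single common packet for BC1 pairs, the two-summand XOR of Has-intersection packets for BC2 pairs, the $p_{\{x\cap y\}}\oplus p_{\{z\}}$ combination for mixed 3-cliques, the cyclic three-summand XOR for all-BC2 3-cliques, and the paired two-summand combination for BC1-only 4-cliques. The ``trichotomy'' you isolate (Has-cancellation, target decodability, or aggregation into a unique other vertex via Lemma~\ref{lem:aggregation}) is exactly the case analysis the paper performs at each receiver, so no further comparison is needed.
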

\begin{proof}
The proof can be found in \appref{app:BC-bound}
\end{proof}

\ignore{
\begin{proof}
\thref{th:simple-PVACs} proves the first statement of the theorem for $K \leq 3$.\\
For $K=4$, assume four vertices $v_{i,x}$, $v_{j,y}$, $v_{k,z}$ and $v_{m,w}$ forming a 4-clique because they pairwise satisfy Condition C1. For this clique, we can always generate the following packet combination:
\begin{equation}
\mathcal{P} = p_{x\cap y} \oplus p_{z\cap w}
\end{equation}
Clearly, the first source packet $p_{x\cap y}$ is instantly decodable for vertices $v_{i,x}$ and $v_{j,y}$ and the source packet $p_{z\cap w}$ is instantly decodable for vertices $v_{k,z}$ and $v_{m,w}$. Thus, from \cref{cor:extended-benefit}, $\mathcal{P}$ will benefit all four vertices. Thus, these vertices are guaranteed to form a proper 4-clique.

Now if we want to assume $K=5$ by adding one extra vertex $v_{n,u}$ pairwise satisfying BC1 with all the above four vertices. We may fall into the following situation:
\begin{itemize}
\item The intersection of any three packets sets of the above 5 vertices is empty (because BC1 guarantees only pairwise intersections between packet sets).
\item Every source packets in the intersection of any two packet sets is in the Lack set of at least one of the three other vertices.
\end{itemize}
In this situation, there does not exist a packet combination that can instantly decode or further aggregate all five vertices. Thus, this 5-clique is not proper. This situation may occur for any $K geq 5$ and the second statement of the theorem follows.
\end{proof}
}

\subsection{Constraining BC2}
It is clear from \thref{th:BC-bound} that BC2 is more critical than BC1. Indeed, when added to BC1, BC2 reduces the bound on the largest $K$ for which the Simple P-VACs can guarantee the formation of proper K-cliques. Actually, it easy to show that, even if one element of the packet set of one vertex is not in the Has set of its adjacent vertex according to BC2, proper cliques cannot be guaranteed for $K>3$. This is illustrated in Example \ref{ex:Improper-clique} in \appref{app:examples}. 
\ignore{
For instance, let us take the example in Fig. \ref{fig:example} depicting four vertices, each pair of which satisfying BC2, and thus forming a 4-clique according to it. Defining $\mathcal{B}(\mathcal{P})$ as the set of vertices benefiting (achieving decodability or aggregation) from $\mathcal{P}$, we can see that there does not exist any coding combination that can benefit all the vertices.

\begin{figure}[t] \label{fig:iImproper-clique}
\centering
\includegraphics[width=1\linewidth]{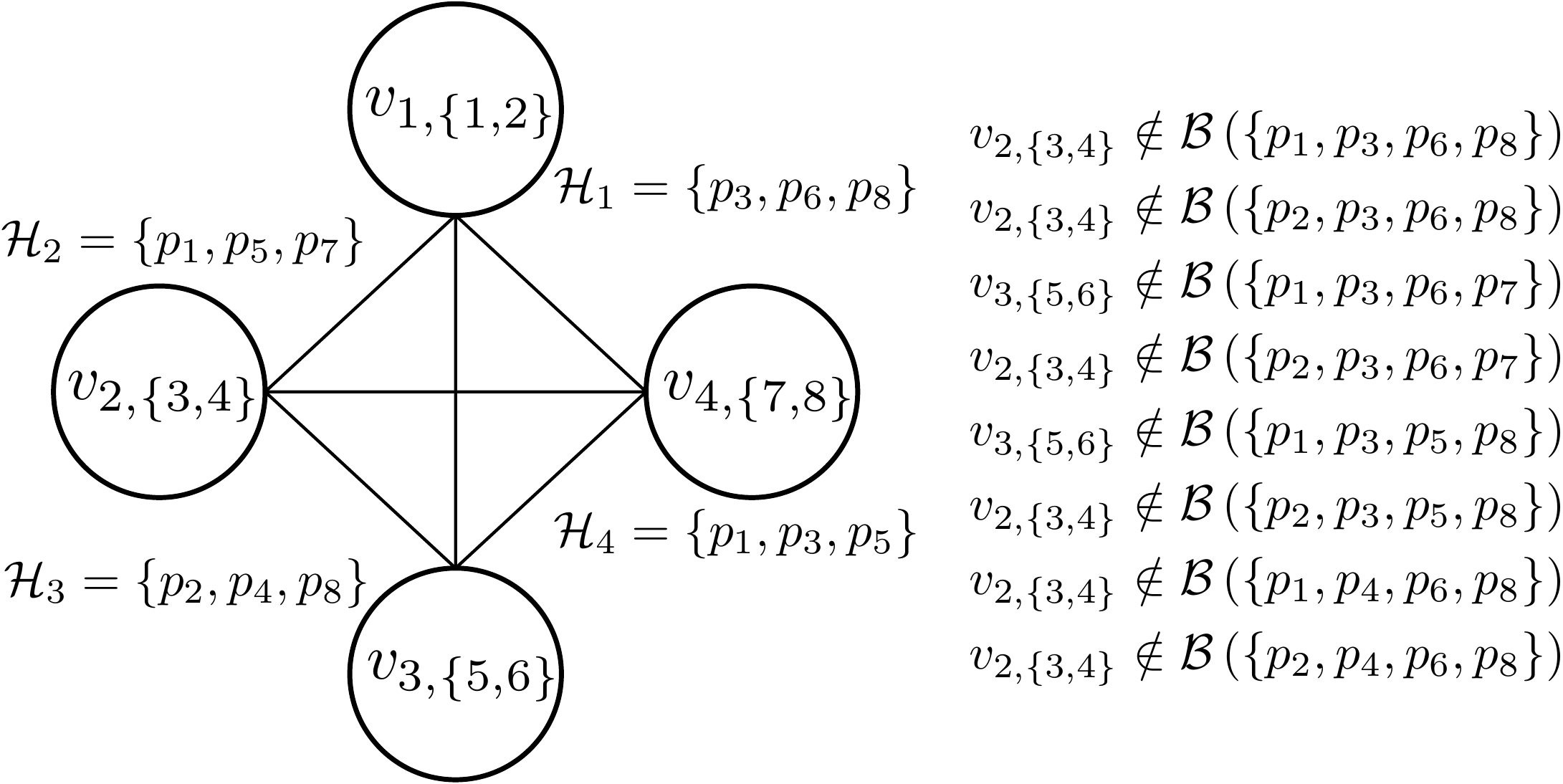}\\
\caption{Example showing four vertices, each pair of which satisfying BC2. A packet combination benefiting each vertex must have a least one source packet from its packet set. For example, a packet combination $\mathcal{P}$ benefiting $v_{1,\{1,2\}}$ must include $p_1$ or $p_2$ or both. Now to serve the other vertices without violating the benefit of $v_{1,\{1,2\}}$, it is easy to see that $\mathcal{P}$ must have at least two elements from $\mathcal{H}_1$. On the right, we can see all combinations of minimum size that satisfy both above conditions. Any larger combination will definitely include one of these combinations in it. We can see that each of these combinations is not benefiting  one other vertex\ignore{ (Other vertices may also not benefit from each of these combinations but only one is shown as this is sufficient to eliminate the combination validity as a solution for all the vertices of the clique)}.}
\label{fig:example}
\end{figure}
}

Consequently, this condition should be restricted to its most by enforcing the following new condition:
\begin{definition}[BC2$^*$: Constrained BC2]
Two vertices $v_{i,x}$ and $v_{k,y}$, such that $x\cap y = \emptyset$, are adjacent only if $x\subseteq \mathcal{H}_k$ and $y\subseteq \mathcal{H}_i$.
\end{definition}
Note that this is the trivial generalization of Condition C2 in the IDNC graph for vertices representing the demand of different source packets.

\subsection{Constraining BC1}
\ignore{Before proceeding with this section, we need to get reminded that this Condition BC1 in the O2-ONC graph should reflect the vertex restrictions that are equivalent to the counterpart restrictions of C1 in the IDNC graph. Consequently, the final result of this constraint must depict the pairwise restrictions of the vertices of each receiver on those of the other receiver in terms of common benefiting packet combinations. In other words, the resulting packet combinations from this constraint should include at most two packets (basic packet as opposed to uncoded source packets in the case of restricted vertices in IDNC) and at least on of them should be in the packet set of each vertex benefiting from this packet.}

In this section, we aim to find a valid constraining of BC1. Similar to BC2, we may think of the trivial generalization of C1 in the IDNC graph by setting two vertices $v_{i,x}$ and $v_{k,y}$ adjacent only if $x=y$. Nonetheless, we will show in the next theorem that we can still construct proper cliques with less restrictive conditions.\ignore{ In what follows, we define $r_{\nsubseteq s}$ as the the elements of set $r$ that are not in set $s$ (i.e. $r_{\nsubseteq s} =  r\setminus s$).}
\ignore{
\begin{theorem}\label{th:DC1}
If the adjacency between any two vertices $v_{i,x}$ and $v_{k,y}$, such that $x\cap y \neq \emptyset$, is set only if the following decoding condition (DC) is satisfied:
\begin{equation}
\mbox{DC1}: x\subseteq y \quad \mbox{OR} \quad y\subseteq x\;,
\end{equation}
then every resulting clique among these vertices will have at least one combination that will instantly decode all them.
\end{theorem}

\begin{proof}
The proof can be found in Appendix \ref{app:DC1*}.
\ignore{
Assume any set of vertices are all adjacent to each other according to DC1. Let $d_{min}$ be the dimension of the one or multiple vertices having the smallest packet set sizes. If more than one vertex have dimension $d_{min}$, then all such vertices will have equal packet sets $x_{min}$ or else they will violate Condition DC1. Moreover, any other vertex with larger dimension will definitely include all the packets of $x_{min}$ in their packet sets or else they will violate Condition DC1. Thus, any source packet in $x_{min}$ or a combination of its packets will be instantly decodable for all the vertices of the clique.
}
\end{proof}
}

\begin{theorem}\label{th:C1*}
If the adjacency between any two vertices $v_{i,x}$ and $v_{k,y}$, such that $x\cap y \neq \emptyset$, is set only if the following benefiting condition is satisfied:
\begin{equation}\label{eq:C1*}
\mbox{BC1}^*:
\begin{cases}
 \quad x \subseteq y    \quad~ & |y| \geq |x| = 1\\
 \quad |x\cap y| \geq |x| - 1    \quad~ &  |y|=|x| \geq 2\\
 \quad |(x\setminus\mathcal{H}_k)\cap y| \geq |x\setminus\mathcal{H}_k| - 1    \quad~ &  |y|>|x| \geq 2,
\end{cases}
\end{equation}
\ignore{Then, there will exist a packet combination that is either instantly decodable or aggregating for all the vertices forming a clique after this condition. In other words,} then all resulting cliques between such vertices are proper.
\end{theorem}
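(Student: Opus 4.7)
The plan is to prove the theorem constructively by exhibiting, for any clique formed under BC1$^*$, a single packet combination that either decodes or aggregates every one of its vertices. I would first establish a useful structural fact: two vertices of the same receiver can never be adjacent under either BC1$^*$ or BC2$^*$, because their packet sets are disjoint subsets of the Wants set while BC2$^*$ would require each to lie in the Has set of the same receiver. Consequently every clique contains at most one vertex per receiver, and for any vertex $v_{k,y}$ in the clique we may freely use $y \cap \mathcal{H}_k = \emptyset$.

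Let $v_{i,x_{\min}}$ be a vertex of the clique whose dimension $d_{\min} = |x_{\min}|$ is minimum. In the easy case $d_{\min}=1$, the first line of BC1$^*$ directly forces $x_{\min} \subseteq y$ for every other vertex $v_{k,y}$, so broadcasting the unique source packet in $x_{\min}$ is instantly decodable for every vertex of the clique. In the main case $d_{\min}\geq 2$ I would set $\mathcal{P} = x_{\min}$. This combination is trivially instantly decodable for $v_{i,x_{\min}}$ itself, and the bulk of the argument is to check, via case analysis matching the two non-trivial lines of BC1$^*$, that it benefits every other clique vertex $v_{k,y}$.

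For the sub-case $|y|=d_{\min}$, BC1$^*$ gives $|x_{\min}\setminus y|\leq 1$. If this difference is empty, $\mathcal{P}\subseteq y$ is directly decodable for $v_{k,y}$; otherwise the lone stray packet $p^*$ either lies in $\mathcal{H}_k$ (still decodable) or in some other vertex $v_{k,z}$ of $r_k$, in which case the hypotheses of \lref{lem:aggregation} are met and $\mathcal{P}$ aggregates $v_{k,y}$ and $v_{k,z}$. For the sub-case $|y|>d_{\min}$, I would let $x^{*} = x_{\min}\setminus\mathcal{H}_k$, so that BC1$^*$ reads $|x^{*}\setminus y|\leq 1$. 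The packets in $x_{\min}\cap\mathcal{H}_k$ are canceled by $r_k$ on reception, so only packets of $x^{*}$ matter, and the same dichotomy (zero or one stray wanted packet) yields decodability or aggregation.

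The step I expect to be the main obstacle is ensuring that $\mathcal{P}$, after cancellation of $\mathcal{H}_k$-packets, has a residual supported in at most two vertices of $r_k$---i.e.\ is genuinely an O2-type packet in the sense of \dref{def:O2-IP}. This is exactly where the ``$\geq|\cdot|-1$'' quantification in BC1$^*$ is essential: it permits at most one stray wanted packet, preventing the residual from leaking into a third vertex. A final degenerate possibility in the $|y|>d_{\min}$ sub-case, namely $|x^{*}|=1$ with the stray packet absent from $y$, must be ruled out separately; this follows by observing that it would force $x_{\min}\cap y = \emptyset$, contradicting the standing premise $x\cap y\neq\emptyset$ under which BC1$^*$ is applied.
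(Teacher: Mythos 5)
Your proof is correct and follows essentially the same route as the paper's: isolate a minimum-dimension vertex $v_{i,x_{\min}}$, dispose of $d_{\min}=1$ via the first line of BC1$^*$, and for $d_{\min}\geq 2$ transmit a combination supported on $x_{\min}$, invoking \lref{lem:aggregation} when exactly one residual packet escapes into another vertex of the same receiver. The only differences are minor refinements on your side: you use $\mathcal{P}=x_{\min}$ uniformly where the paper splits into two sub-cases (using a two-packet combination $\{p,p'\}$ when $x_{\min}\cap\mathcal{H}_k=\emptyset$), and you explicitly close the degenerate sub-case $|x_{\min}\setminus\mathcal{H}_k|\leq 1$ with the stray packet outside $y$ by appealing to the standing premise $x\cap y\neq\emptyset$, a point the paper leaves implicit.
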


\begin{proof}
The proof can be found in Appendix \ref{app:C1*}.
\ignore{
Assume any set of vertices that are all adjacent to each other according to their pairwise satisfaction of BC1$^*$. Let $d_{min}$ be the dimension of the one or multiple vertices having the smallest dimension in this clique and let $v_{i,x_{min}}$ be one of such vertices.\\
$\quad$\\
\textbf{Case 1}: $d_{min} = 1$ \\
In this case, there exists at least one vertex $v_{i,x_{min}}$ with dimension 1, thus getting all its adjacency to all the other vertices of the clique by the first entry of \eqref{eq:C1*} when $\min\{|x|,|y|\}=1$. Thus, the adjacency of this vertex to all other vertices of the clique becomes a special instance of \thref{th:DC1}. Consequently, all the vertices of the clique will have $x_{min}$ (consisting of one packet) in their packet sets and thus this packet will be instantly decodable for all the vertices of this clique.\\
$\quad$\\
\textbf{Case 2}: $|y| \geq |x| \geq 2$,  $x\cap\mathcal{H}_k = \emptyset$ \\
Note that this case covers the second and third entries of \eqref{eq:C1*} when $x\setminus\mathcal{H}_k = x$.

In this case, any other vertex than $v_{i,x_{min}}$ must have at least $d_{min}-1$ source packets from $x_{min}$ in their packet sets, or else BC1$^*$ will be violated. The number of different sub-combinations of these $d_{min} -1$ source packets out to $d_min$ ones is equal to $\binom{d_{min}}{d_{min}-1} = d_{min}$ sub-combinations. Out of these $d_{min}$ possible sub-combinations, the number of sub-combinations having one arbitrary common packet $q\in x_{min}$ is equal to $\binom{d_{min}-1}{d_{min}-2} = d_{min} - 1$. Thus, there exists only one sub-combination $y =  x_{min}\setminus q$, ($|y| = d_{min}-1$) that does not have this one common packet $q$.

Now as stated above, every vertex other than $v_{i,x_{min}}$ in the formed clique will have one of the $d_{min}$ sub-combinations of size $d_{min}-1$ in its packet set. Note that if there exists more than $d_{min}$ vertices in the clique, then from the pigeonhole principle, more than one vertex will have the same combination in its packet set. Consequently, each of these vertices will either include the common packet $q$ or a the whole sub-combination $y$. Thus, the packet combination $\mathcal{P} = q \oplus p$ ($p\in y$) will have the following effect on all the vertices of the clique:
\begin{itemize}
\item For any vertex $v_{k,z}$ having $q \in z$, $\mathcal{P}$ will be either instantly decodable or aggregating if $p \in\mathcal{H}_k$ or $p \in\mathcal{W}_k$, respectively.
\item For any vertex $v_{m,w}$ having $p \in w$, $\mathcal{P}$ will be either instantly decodable or aggregating if $q \in\mathcal{H}_m$ or $q \in\mathcal{W}_m$, respectively.
\end{itemize}
Clearly, $\mathcal{P}$ will be also instantly decodable for $v_{i,x}$. Thus, $\mathcal{P}$ is instantly decodable or aggregating to all the vertices of the clique, thus making the clique proper.\\
$\quad$\\
\textbf{Case 3}: $|y| > |x| \geq 2$, $x\cap\mathcal{H}_k \neq \emptyset$\\
Again in this case, every vertex other than $v_{i,x_{min}}$ in the formed clique will have one of the $d_{min}$ sub-combinations of size $d_{min}-1$, but this time these $d_{min}-1$ source packets are distributed between the Has set of this vertex and its packet set. Thus, the packet combination $\mathcal{P} = \bigoplus_{j\in x_{min}} j$ (i.e. the combination of all the source packets in $x_{min}$) will have the following effect on the vertices of the clique:
\begin{itemize}
\item For any vertex $v_{k,y}$ having $|(x_{min}\setminus\mathcal{H}_k)\cap y| = |x_{min}\setminus\mathcal{H}_k|$ (which is equivalent to $x_{min}\setminus\mathcal{H}_k \subseteq y$), all the elements of $\mathcal{P}$ will be either in $\mathcal{H}_k$ and $y$. Thus, the packet combination will be instantly decodable for $y$.
\item For any vertex $v_{k,y}$ having $|(x_{min}\setminus\mathcal{H}_k)\cap y| = |x_{min}\setminus\mathcal{H}_k|-1$ (which is equivalent to $|(x_{min}\setminus\mathcal{H}_k) \setminus y| = 1$), $d_{min}-1$ elements of $\mathcal{P}$ will be either in $\mathcal{H}_k$ and $y$, leaving only one source packet in $\mathcal{P}$ outside these two sets. Thus, packet combination will be aggregating for $y$.
\end{itemize}
Clearly, $\mathcal{P}$ will be also instantly decodable for $v_{i,x}$. Thus, $\mathcal{P}$ is instantly decodable or aggregating to all the vertices of the clique, thus making the clique proper.
}
\end{proof}

\ignore{
\begin{theorem} \label{th:C1star}
If the adjacency between any two vertices $v_{i,x}$ and $v_{k,y}$, such that $x\cap y \neq \emptyset$, is set if and only if the following applies:
\begin{equation}
\mbox{BC1}^{\star}:
\begin{cases}
\left|x \cap y \right| \geq \frac{|x|}{2}    \qquad \quad &  |y| = |x| \geq 1 \\
\left|(x\setminus\mathcal{H}_k)\cap y \right| \geq \frac{|x\setminus\mathcal{H}_k|}{2}    \qquad \quad &  |y| > |x| \geq 1
\end{cases}
\end{equation}
then there will exist a packet combination that is either instantly decodable or aggregating for each of the vertices forming any clique after this condition.
\end{theorem}

\begin{proof}
Assume any set of vertices are all adjacent to each other according to Condition C1$^\star$. Let $d_{min}$ be the dimension of the one or multiple vertices having the smallest packet set sizes.\\
$\quad$\\
\textbf{Case 1}: $d_{min} = 1$:\\
In this case, there exists one or multiple vertices having packet sets of size 1. If multiple of such vertices exists then they all should have equal packet sets. If this is not true for any pair of such vertices $v_{i_x}$ and $v_{k,y}$ (i.e. $|x|=|y|=1$ and $x\neq y$), then $|x\cap y| = 0 \leq \frac{1}{2}$ which contradicts the fact that they are adjacent according to C1$^\star$. Moreover, for any other vertex $v_{m,z}$ of strictly larger dimension than $d_{min}$, we must have $x\subsetneq\ z$. Otherwise, $|x\cap z| = 0 < \frac{1}{2}$, which again contradicts with the fact that they are adjacent according to C1$^\star$.

From the above two facts, we can see, in this case, the satisfaction of BC1$^\star$ between all $K$ vertices means that the packet set of the smallest dimension vertex $v_{i,x_{min}}$ must be subset or equal to all their packet sets. Thus, the packet element $p_{x_{min}}$ is instantly decodable for all the vertices of this $K$-clique.\\
$\quad$\\
\textbf{Case 2}: $|y| \geq |x| \geq 2$:\\

Still in progress using extremal set theory, combinatorics of intersecting sets and pigeonhole principle.\\
$\quad$\\
\textbf{Case 3}:$|y| > |x| \geq 2$, $x\cap\mathcal{H}_k \neq \emptyset$\\
Should be an extension of Case 2 as was the situation in \thref{th:C1star}.

\ignore{
Let $v_{i,x_{min}}$ be one of the vertices having dimension $d_{min}$. Then, the packet set $y$ of any other vertex $v_{k,y}$ in this clique must include a combination of $\left\lceil\frac{|x_{min}|}{2}\right\rceil$ source packets of the set $x_{min}$, or else $|y\cap x_{min}| < \frac{|x_{min}|}{2}$ and which contradicts the fact that this vertex is adjacent to $v_{i,x_{min}}$ according to C1$^\star$.

But, we know from basic combinatorics that there exist $\binom{x_{min}}{\left\lceil\frac{|x_{min}|}{2}\right\rceil}$ combinations of size $\left\lceil\frac{|x_{min}|}{2}\right\rceil$ in $x_{min}$. We also know that the portion out of these $n$ combinations, which have one specific source packet $p^{(1)}_{x_{min}}$ from $x_{min}$ is equal to:
\begin{equation}
\frac{\binom{|x_{min}|-1}{\left\lceil\frac{|x_{min}|}{2}\right\rceil-1}}{\binom{x_{min}}{\left\lceil\frac{|x_{min}|}{2}\right\rceil}}
 = \frac{\left\lceil\frac{|x_{min}|}{2}\right\rceil}{|x_{min}|} \geq \frac{1}{2}
\end{equation}
}

\ignore{Assume an arbitrary partitioning of the elements of $x_{min}$ into two subsets $x^{(1)}_{min}$ and $x^{(2)}_{min}$ with sizes $\left\lceil\frac{|x_{min}|}{2}\right\rceil$ and $\left\lfloor\frac{|x_{min}|}{2}\right\rfloor$, respectively.}
\end{proof}
}

\ignore{
\begin{hypothesis}
If the adjacency between any two vertices $v_{i,x}$ and $v_{k,y}$, such that $x\cap y \neq \emptyset$, is set if and only:
\begin{equation}
\mbox{BC1}^*: x_{\nsubseteq \mathcal{H}_k} \subseteq  y \quad \mbox{OR}\quad y_{\nsubseteq \mathcal{H}_i} \subseteq  x
\end{equation}
then there will exist a packet combination that is instantly decodable for all the vertices forming any clique after this condition.
\end{hypothesis}
}

\section{O2-ONC Graph: Algorithmic Procedures}
After defining the vertices and P-VACs of our proposed graph model and developing a good understanding of its properties, we can now describe the procedures that any algorithm should follow in order to optimize any desired metric in O2-ONC.

\subsection{Graph Construction} \label{sec:construction}

\ignore{
To construct the ONC graph\ignore{ using BC1$^\star$ and BC2$^\star$}, the following procedure must be followed.

\subsubsection{Initial Vertex Set Construction}$\quad$\\
At the beginning of the delivery process, there is no O2-IPs stored at any receiver yet. Thus, each vertex in the graph should represent only one wanted source packet by a given receiver as in the IDNC graph. To build this initial graph's vertex set and prepare it for possible future aggregation steps, we first define a vertex identification matrix as follows. For each receiver $i$, each wanted packet is represented by a separate row in this matrix. This $1\times N$ row vector will include an entry 1 at the position of this wanted packet, zeros at the packets received by receiver $i$ and a dummy number $\theta>>1$ for all other wanted packets by this receiver $i$.

Clearly, each row in this matrix corresponds to a vertex in the very first O2-ONC graph at the beginning of the delivery process, as each of them represents the lack of one specific packet by one specific receiver.

The procedure of generating aggregated vertices later in the delivery process will be explained in the graph update section below.

The complexity of generating the initial vertex set as described above is $O(MN)$. Note that this is just a one time procedure at the beginning of the delivery process. For later steps, the vertex set will be only subject to the update procedure after each transmission described below.

\subsubsection{Edges}$\quad$\\
To check whether an edge should exist between any two vertices $v_{i,x}$ and $v_{k,y}$ in the graph (i.e. whether these two vertices satisfy either BC1$^*$ or BC2$^*$), we first multiply the elements of $f_{i,x} \otimes f_{k,y}$ of these two vertices, where $\otimes$ is the element by element multiplication operator. We then set these two vertices adjacent in the graph if and only if one of the following results are satisfied (we assume without loss of generality that $|x|\leq |y|$):
\begin{itemize}
\item If the number of elements of value $\theta$ at either the locations of ones $f_{i,x}$ is less than or equal to $1$.
\item If the number of elements of values 1 and $\theta$ is strictly equal to zero.
\end{itemize}
Since we need to make this check for every two vertices without repetition, we need a $\binom{|\mathcal{V}|}{2}$ checks , $\mathcal{V}$ being the vertex set size, which is bounded by $O(MN)$. Thus, the overall complexity of the edge generation process is $O(M^2N^2)$.
}

Before the start of the delivery process, no receiver has stored O2-IP yet and thus the graph starts as an IDNC graph. This will require $O(MN)$ steps as the maximum possible number of vertices in the graph is equal to $MN$ (when all receivers have no packets at all). Along the delivery process, the vertices of the graph progressively aggregate at each instance (if any) a receiver stores an O2-IP that satisfy \lref{lem:aggregation}. Consequently, there will be no need to use \dref{def:generalized-vertex} to generate the vertices before each transmission.

With the vertices established, each pair are checked to determine whether they should be set adjacent (i.e. whether they satisfy either of the P-VACs  BC1$^*$ or BC2$^*$). Since this check needs to be made for every two vertices without repetition, we need a $\binom{V}{2}$ checks, $V$ being the graph's vertex set size bounded by $O(MN)$. Thus, the overall complexity of the edge generation process is $O(M^2N^2)$.

\subsection{Served Vertex Selection}
Similar to IDNC, the selection of a packet combination is usually done to achieve a specific target, such as minimizing the completion time \cite{TON10-CD}, minimizing the decoding delay \cite{GC10}, minimizing the in-order-delivery delay \cite{Sundararajan2009} or providing a certain metric of quality of service \cite{5072357,6030131}. Usually, these metrics can be represented\ignore{ in one way or another} by assigning a weight $w_{i,x}$ to each vertex $v_{i,x}$ in the graph, which gives a certain priority of service to its receiver or its packet set in order to achieve the target. Thus, the clique selection can be done by solving a maximum weight clique problem on the constructed graph.

It is well known that finding or approximating the maximum weight clique in a graph is NP-hard \cite{\ignore{Garey1979,}Ausiello1999}. However, there exist several exact algorithms that solve this problem in polynomial time for moderate size graphs (\cite{Yamaguchi2008} and references therein). Nonetheless, the complexity of these algorithms may still be prohibitive for some applications \cite{Yamaguchi2008}. In this case, \cite{TON10-CD,GC10} and many other recent works in the IDNC context have developed an $O(M^2N)$ iterative vertex search procedure\ignore{ with updatable weights after each vertex selection} and showed that it can achieve a very small degradation compared to the optimal clique selection. First, define the modified weigh $\omega_{i,x}(\mathcal{G}_s)$ for each vertex $v_{i,x}$ in the sub-graph $\mathcal{G}_s$ as:
\begin{equation}\label{modified-weights}
\omega_{i,x}(\mathcal{G}_s) = w_{i,x}\cdot \sum_{v_{k,y}\in \mathcal{N}_{\mathcal{G}_s}(v_{k,y})} w_{k,y}
\end{equation}
where $\mathcal{N}_{\mathcal{G}_s}(v_{i,x})$ is the set of vertices adjacent to vertex $v_{i,x}$ in sub-graph $\mathcal{G}_s$. Consequently, this modified weight is large for vertices both having large raw weights and adjacent to vertices in $\mathcal{G}_s$ with large raw weights themselves. This last condition helps in selecting the vertices having high chance in being in a maximal clique with a large total sum of raw weights.

The procedure determines the desired clique $\kappa$ iteratively by selecting the vertex with the maximum modified weight in each iteration. After each vertex selection, the modified weights are re-computed in the subgraph of vertices that are adjacent to all previously selected vertices in $\kappa$, and then the vertex having the new maximum modified weight is selected. This procedure stops when no more adjacent vertices to all previously selected vertices in $\kappa$ are found.

\subsection{Determination of Packet Combination}
Once the desired clique $\kappa$ is determined, the packet combination $\mathcal{P}^\star$ is simply determined as follows. First, pick the vertex $v_{i,x_{min}}$ with smallest dimension in $\kappa$. As shown in the proof of \thref{th:C1*}, the packet set $x_{min}$ of this vertex should be all, or all except for one, in the union of the packet set and Has set of every vertex in $\kappa$ intersecting with it. Otherwise, Condition BC1$^\star$ will be violated. The vertices in $\kappa$ with non-intersecting packet sets with $x_{min}$ will have all their packet sets in $\mathcal{H}_i$ and vice versa, as mandated by BC2$^\star$. Thus, the packet combination $\mathcal{P}^\star_1$, including all the source packets in $x_{min}$, will benefit all the vertices in $\kappa$ having intersecting packet sets with $x_{min}$, and will be all in the Has sets of the remaining vertices of $\kappa$. Thus, the vertices having intersecting packet sets with $x_{min}$ are removed from $\kappa$ as they are already served by $\mathcal{P}^\star_1$.

For the remaining vertices in $\kappa$, we can repeat the above procedure several times (thus finding $\mathcal{P}^\star_2$, $\mathcal{P}^\star_3$, $\dots$) until no vertex remains in $\kappa$. Thus, the combination $\mathcal{P}^\star = \mathcal{P}^\star_1 \cup \mathcal{P}^\star_2 \cup \mathcal{P}^\star_3 \cup \dots$ is definitely a combination that benefits all the vertices of $\kappa$. Note that this procedure is of $O(M)$ complexity since there exists at most $M$ vertices in any given clique.

\ignore{
\subsection{Possible Added Packet}$\quad$\\
According to Corollary \ref{cor:extended-benefit}, sending only a source packets benefits less or the same set receivers than the same source packet combined with any other source packet. Thus, if the maximum weight clique procedure ends up with only one packet $p$ in the coding combination, another packet $q$ can be XORed with $p$ so as to serve more receivers. The selection of packet $q$ can be done based on the same criteria used to select the packet $p$ in the first place to optimize the required parameter. For example, if the parameter to optimize is the completion time, the packet $q$ can be identified as the packet that is common in the Wants sets of the worst receivers (receivers with largest Wants sets and erasure probabilities).
}

\ignore{
\subsection{Other Benefiting Receivers}
As discussed earlier, the P-VACs, BC1$^*$ and BC2$^*$ are constrained versions of the general edge benefiting Simple P-VACs described in \dref{def:Simple-PVACs}. Consequently, when selecting the maximum weight clique $\kappa$  from the graph constructed using these constrained conditions, there may exist other vertices (corresponding to other receivers) that can benefit from the combination $\mathcal{P}^\star$ (corresponding to $\kappa$) but do not show up in $\kappa$. It is important for the sender to identify these receivers (if they exist) in advance so that it monitors their feedback along with the ones already identified by $\kappa$.

To identify such receivers, the sender needs to check whether the determined combination $\mathcal{P}^\star$ can decode or aggregate any other vertex $v_{i,x}$ in the graph. If $\left|\mathcal{P}^\star \setminus \{x\cup\mathcal{H}_i\}\right|  = 0$ and $\mathcal{P}^\star\cap x \neq \emptyset$, then $\mathcal{P}^\star$ can decode $v_{i,x}$. Else, if $\left|\mathcal{P}^\star \setminus \{x\cup\mathcal{H}_i\}\right|  = 1 $ and $\mathcal{P}^\star\cap x \neq \emptyset$, then $\mathcal{P}^\star$ can aggregate $v_{i,x}$ with another vertex of receiver $i$ following \dref{def:O2-IP}.

Since this check is done for all vertices that are not in $\kappa$, the complexity of this check is $O(|\mathcal{V}|) = O(MN)$.

\ignore{
To identify such receivers, multiply the elements of the packet combination vector $f_{\mathcal{P}^\star}$ of $\mathcal{P}^\star$ with $f_{i,x}$ for every vertex $v_{i,x}\notin \kappa$. The packet combination vector $f_{\mathcal{P}^\star}$ is a $1\times N$ vector which has a 1 entry at the position of each source packet that is in $\mathcal{P}^\star$ and $\theta$ at all other positions. From this element-wise multiplication, we can determine whether $v_{i,x}$ can/cannot benefit from $\mathcal{P}^\star$ as follows:
\begin{itemize}
\item If the vector $f_{\mathcal{P}^\star} \otimes f_{i,x}$ does not include any ones, then $\mathcal{P}^\star$ is not innovative for $v_{i,x}$.
\item If the vector $f_{\mathcal{P}^\star} \otimes f_{i,x}$ includes at least one 1 entry and no $\theta$ entries, then $\mathcal{P}^\star$ is instantly decodable for $v_{i,x}$.
\item If the vector $f_{\mathcal{P}^\star} \otimes f_{i,x}$ includes at least one 1 entry and only one $\theta$ entry, then $\mathcal{P}^\star$ is aggregating for $v_{i,x}$.
\item If the vector $f_{\mathcal{P}^\star} \otimes f_{i,x}$ includes at least one 1 entry and more than one $\theta$ entry, then $\mathcal{P}^\star$ is non-instantly decodable for $v_{i,x}$.
\end{itemize}
Since we need to do this check for all vertices not in $\kappa$, the complexity of this check is $O(|\mathcal{V}) = O(MN)$.

\subsection{Graph Update}

\subsubsection{Decodability}
When a receiver $i$ receives a packet combination whose elements are all included in either its Has set or the packet elements of only one of its vertices $v_{i,x}$, than this packet is instantly decodable for this receiver. Once the reception feedback is sent to the sender from this receiver, it removes both $f_{i,x}$ from the modified feedback matrix and $v_{i,x}$ from the O2-ONC graph.

\subsubsection{Aggregation}
When a receiver $i$ receives a packet combination including lacked packets represented by only two vertices in the ONC graph (an O2-IP for this receiver), the corresponding row vectors to these two vertices in the modified feedback matrix are removed from the feedback matrix and are replaced by only one new row vector (corresponding to the new resulting vertex due to aggregation) in which all one elements in both old vectors is set to one in the new vector and the rest of the elements are kept the same.
}

\subsection{Receiver Reaction and Graph Update}
When a receiver obtains a packet combination that satisfy the decodability or aggregation conditions describe in the previous sub-section, it takes the corresponding measure of either decoding the new set of packets or storing the received O2-IP, respectively. In both cases, the receiver notifies the sender of its reception of this packet, which either removes the vertex of this receiver corresponding to the decoded packets or aggregates this vertex with another vertex according to its new stored O2-IP packet. If the sender does not hear a feedback from a targeted receiver, it knows that it did not receive the sent combination and thus keeps its corresponding vertex as is in the graph. Again since this update is done to at most every vertex in $\kappa$, the complexity of this set is $O(M)$.
}

\subsection{Overall Complexity}
From the previous descriptions, we can see that the maximum complexity of any one procedure is $O(M^2N^2)$, which means that this complexity dominates all others and is the actual complexity of the whole algorithm. Note that this algorithm is more complex than the corresponding algorithms in IDNC with only a factor of $O(N)$. Most importantly, this complexity is much lower than that of handling exponentially increasing virtual queues (in $M$ and possibly $N$ for some cases), which is a common practice in most works allowing storage of undecoded packet combinations in ONC \cite{4476183,4594999,Drinea2009,Gatzianas2010,Wang2010}.

\section{Case Study: Completion Time Problem}
The completion time problem is one of the most fundamental problems in packet delivery\ignore{ processes} using network coding. The completion time is defined as the total number of transmissions until all the receivers get all their missing packets. In this section, we present the performance of our proposed algorithmic procedures using our designed O2-ONC graph in solving the completion time problem\ignore{ in this subclass of ONC} over heterogenous erasure channels. We also compare its performance to both IDNC and the global optimal performance achievable by any linear network code. IDNC and O2-ONC are tested for both the optimal and search clique approaches. In our proposed O2-ONC graph, the raw weights of vertices are set as in \cite{TON10-CD}, which were found to be one of the best weights to reduce the completion time using IDNC. For fairness of comparison, all compared approaches are assumed to use linearly independent coding combinations across all transmissions.

Fig. \ref{fig:case-study} depicts the case-study comparison against the number of receivers (for 30 packets and average erasure probability of 0.15) and the average erasure probability (for 30 packets and 60 receivers). Both comparisons show the expected outperformance of our proposed algorithms compared to IDNC for larger number of receivers and erasure probabilities, especially for the more practical clique search approach. They also show that even the simple clique search O2-ONC approach almost achieves optimality over all linear network codes.

\begin{figure}[t]
\centering
\includegraphics[width=1\linewidth]{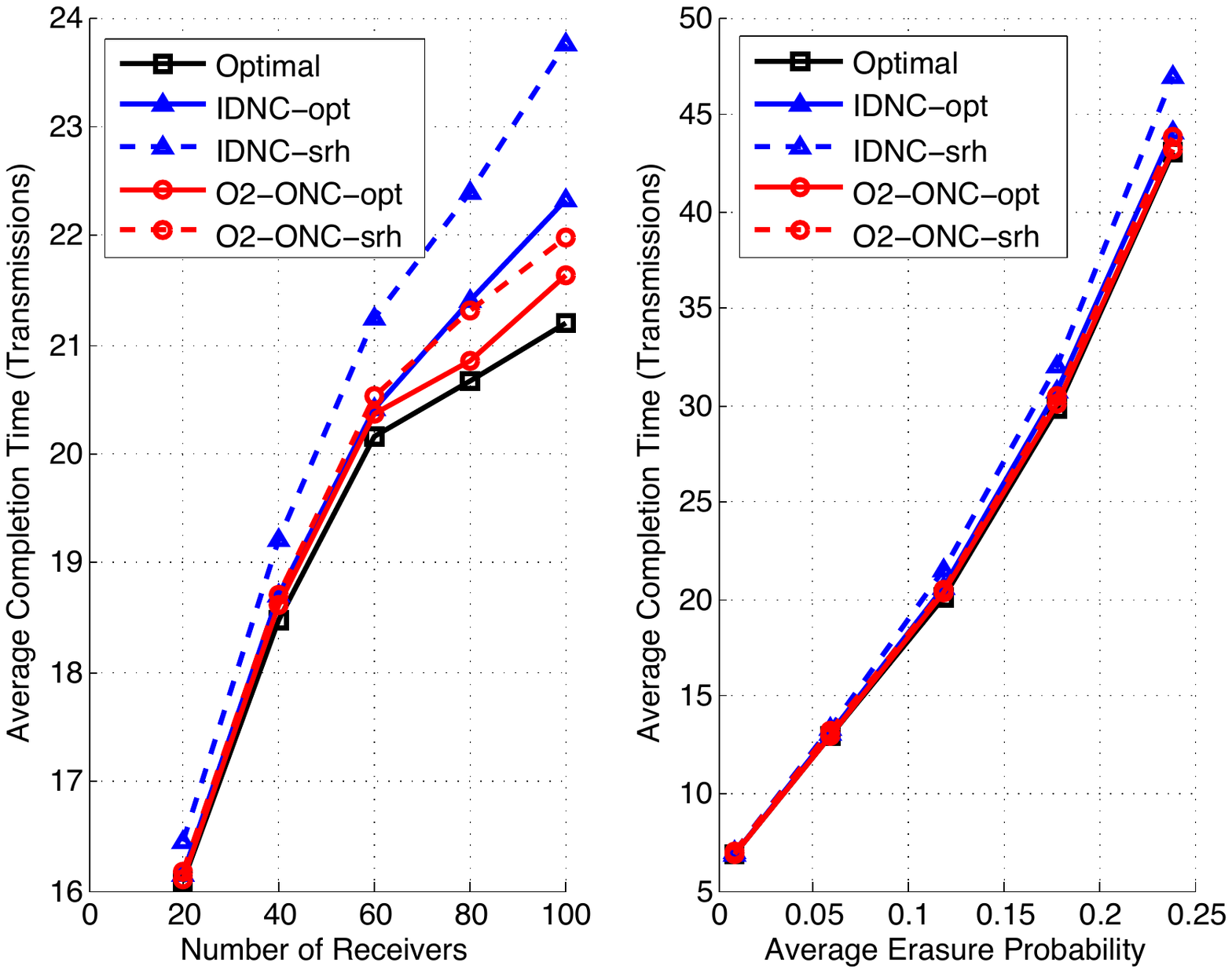}\\
\caption{Completion time comparisons against IDNC and global optimal solution.}
\label{fig:case-study}
\end{figure}

\section{Conclusion}
This paper introduced a graph model and algorithmic procedure that can optimize any metric in O2-ONC, with only a linear increase in complexity (with the number of packets $N$) compared to the well-studied IDNC solutions. This graph model is based on a new generalized definition of the graph vertices and the novel notion of vertex aggregation. Constrained P-VACs were derived and proven to guarantee the benefit of every clique in the graph by only one coded transmission. Simulation results for solving the completion time problem using our proposed framework has shown closer performance to the optimal one, compared to the IDNC solution, especially for large and harsh-channel networks.

\bibliographystyle{IEEEtran}
\bibliography{IEEEabrv,bibfile}

\appendices

\section{Clarifying Examples} \label{app:examples}
\begin{example}\label{ex:generalized-vertex}
A receiver $r_i$ with $\mathcal{H}_i = \{p_1,p_3,p_5,p_7\}$ and $\mathcal{W}_i = \{p_2,p_4,p_6,p_8\}$ is storing the following combinations:
\begin{itemize}
\item $\mathcal{P}_a = a_1 p_1 + a_2 p_2 + a_3 p_4$.
\item $\mathcal{P}_b = b_1 p_3 + b_2 p_4 + b_3 p_7 + b_4 p_8$.
\end{itemize}
Clearly, these 2 O2-IPs collectively include only 3 source packets from $\mathcal{W}_i$ and satisfy the conditions in \dref{def:generalized-vertex}. Consequently, these 3 source packets can be represented in the O2-ONC graph by only ONE vertex $v_{i,\{2,4,8\}}$. This representation means that $r_i$ can decode all three source packets if it receives only ONE of these source packets or ONE appropriate combination of any of  them that satisfy the above decodability conditions. For example, if $r_i$ receives $p_2$, $p_4$, $p_8$, or any combination of two or three or them (and possibly combined with other packets in $\mathcal{H}_i$, it will have 3 equations in 3 unknown source packets, which will result in their decoding.
\end{example}

\begin{example} \label{ex:aggregating-packets}
In the same scenario of Example \ref{ex:generalized-vertex}, if $r_i$ receives a coded packet $\mathcal{P}_c = c_1 p_2 + c_2 p_5 + c_3 p_6 + c_4 p_8$, it can be seen that, by excluding the Has set packets of $r_i$ from $\mathcal{P}_c$, the remaining source packets belong to either vertex $v_{i,\{2,4,8\}}$ or $v_{i,6}$. Despite the fact that $\mathcal{P}_c$ has more than 2 (in fact 3) packets from $\mathcal{W}_i$, it still is considered as an $O2-IP$ according to \dref{def:O2-IP} and its reception at $r_i$ results in aggregating the vertices $v_{i,\{2,4,8\}}$ and $v_{i,6}$ into one vertex $v_{i,\{2,4,6,8\}}$. Indeed, $r_i$ now needs to receive only ONE of these source packets or ONE combination of them (again possibly with other packets from $\mathcal{H}_i$), which can be used along with $\mathcal{P}_a$, $\mathcal{P}_b$ and $\mathcal{P}_c$ to decode all of these four source packets. Thus, they must be all represented by ONE vertex in the O2-ONC graph.
\end{example}

\begin{example}\label{ex:proper-clique}
\begin{figure}[t]
\centering
\includegraphics[width=0.6\linewidth]{Proper-Clique}\\
\caption{Example of a proper clique between 3 vertices in the O2-ONC graph (The rest of the vertices are hidden to emphasize this clique). Each two vertices satisfy either of the Simple P-VACs in \dref{def:Simple-PVACs}. \ignore{ A coding combination $\{p_1,p_3\}$ or $\{p_1,p_4\}$ or $\{p_2,p_3\}$ or $\{p_2,p_4\}$ will definitely either decode each of these vertices or will aggregate it with another vertex of the same receiver. For example, $\{p_1,p_3\}$ will decode vertices $v_{1,\{1,2\}}$ and  $v_{3,\{1,3\}}$, and will aggregate $v_{2,\{3,4\}}$ and $v_{2,1}$ into one vertex $v_{2,\{1,3,4\}}$.}}
\label{fig:proper-clique}
\end{figure}
Fig. \ref{fig:proper-clique} shows an example of a proper 3-clique in the shown graph scenario. We can easily see that all combinations $\{p_1,p_3\}$ or $\{p_1,p_4\}$ or $\{p_2,p_3\}$ or $\{p_2,p_4\}$ will definitely either decode each of these vertices or will aggregate it with another vertex of the same receiver. For example, $\{p_1,p_3\}$ will decode vertices $v_{1,\{1,2\}}$ and  $v_{3,\{1,3\}}$, and will aggregate $v_{2,\{3,4\}}$ and $v_{2,1}$ into one vertex $v_{2,\{1,3,4\}}$.
\end{example}

\begin{example} \label{ex:Improper-clique}
Fig. \ref{fig:Improper-clique} depicts four vertices, each pair of which satisfying BC2, and thus forming a 4-clique according to it. Define $\mathcal{B}(\mathcal{P})$ as the set of vertices benefiting (achieving decodability or aggregation) from $\mathcal{P}$. A packet combination benefiting each vertex must have a least one source packet from its packet set. For example, a packet combination $\mathcal{P}$ benefiting $v_{1,\{1,2\}}$ must include $p_1$ or $p_2$ or both. Now to serve the other vertices without violating the benefit of $v_{1,\{1,2\}}$, it is easy to see that $\mathcal{P}$ must have at least two elements from $\mathcal{H}_1$. On the right, we can see all combinations of minimum size that satisfy both above conditions. Any larger combination will definitely include one of these combinations in it. We can see that each of these combinations is not benefiting one other vertex, and thus there does not exist any coding combination that can benefit all the vertices.\ignore{ (Other vertices may also not benefit from each of these combinations but only one is shown as this is sufficient to eliminate the combination validity as a solution for all the vertices of the clique).}
\begin{figure}[t] 
\centering
\includegraphics[width=1\linewidth]{Improper-Clique}\\
\caption{Example showing four vertices, each pair of which satisfying BC2. \ignore{A packet combination benefiting each vertex must have a least one source packet from its packet set. For example, a packet combination $\mathcal{P}$ benefiting $v_{1,\{1,2\}}$ must include $p_1$ or $p_2$ or both. Now to serve the other vertices without violating the benefit of $v_{1,\{1,2\}}$, it is easy to see that $\mathcal{P}$ must have at least two elements from $\mathcal{H}_1$. On the right, we can see all combinations of minimum size that satisfy both above conditions. Any larger combination will definitely include one of these combinations in it. We can see that each of these combinations is not benefiting  one other vertex\ignore{ (Other vertices may also not benefit from each of these combinations but only one is shown as this is sufficient to eliminate the combination validity as a solution for all the vertices of the clique)}.}}
\label{fig:Improper-clique}
\end{figure}
\end{example}

\section{Proof of Lemma \ref{lem:aggregation}} \label{app:aggregation}
We can deal with the $|x|-1$ ($|y|-1$) O2-IP packets of vertex $v_{i,x}$ ($v_{i,y}$) as a set of $|x|-1$ ($|y|-1$) independent linear equations in $|x|$ ($|y|$) variables. When the packet $\mathcal{P}$, defined as shown above, is received by $r_i$, the packets belonging to $\mathcal{H}_i$ (if any) in $\mathcal{P}$ can be cancelled and we get one extra equation in variables that belong to $x\cup y$ only. But since $\mathcal{P}\cap x \neq \emptyset$ and $\mathcal{P}\cap y \neq \emptyset$, then this extra equation, along the $|x|-1$ equations of $v_{i,x}$ and $|y|-1$ equations of $v_{i,y}$ represent $|x|-1 + |y|-1 + 1 = |x|+|y|-1$ equations all in the $|x|+ |y|$ variables of $x\cup y$. Consequently, we can solve for all these variables  (i.e. decode all these packets) by only receiving one extra equation in the same variables. Thus, as per \dref{def:generalized-vertex}, all these packets must be represented by only one vertex $v_{i,x\cup y}$.

\section{Proof of Theorem \ref{th:BC-bound}} \label{app:BC-bound}
In this proof, we denote by $p_{\{s\}}$ any source packets in set $s$.

We first prove the first statement of the theorem. For the case of $K = 2$:
\begin{itemize}
\item If BC1 is true for any two vertices $v_{i,x}$ and $v_{k,y}$, then $\mathcal{P} = p_{\{x\cap y\}}$ will be instantly decodable for both receivers $r_i$ and $r_k$.
\item If BC2 is true for any two two vertices $v_{i,x}$ and $v_{i,x}$, then $\mathcal{P} = \left\{p_{\{x\cap\mathcal{H}_k\}}, p_{\{y\cap\mathcal{H}_i\}}\right\}$ will be instantly decodable for both receivers $i$ and $k$.
\end{itemize}
For the case of $K = 3$:
\begin{itemize}
\item If BC1 is true for only two vertices $v_{i,x}$ and $v_{k,y}$, then from our analysis of the two receivers case, there exists an instantly decodable packet $ p_{\{x\cap y\}}$ for these two vertices. Now, if another third vertex $v_{m,z}$ of receiver $r_m$ is adjacent to both $v_{i,x}$ and $v_{k,y}$ (according to either BC1 or BC2), then the combination $\mathcal{P} = \left\{p_{\{x\cap y\}}, p_{\{z\}}\right\}$ will definitely be either instantly decodable or aggregating (from \lref{lem:aggregation}) for $r_i$ ($r_k$) if $p_{\{z\}} \in x\cup\mathcal{H}_i$ ($p_{\{z\}} \in y\cup\mathcal{H}_k$) or $p_{\{z\}}\in \mathcal{W}_i\setminus x$ ($p_{\{z\}}\in\mathcal{W}_k\setminus y$), respectively. The combination $\mathcal{P}$ will also be instantly decodable or aggregating to vertex $r_m$ if $p_{\{x\cap y\}} \in z\cup\mathcal{H}_m$ or $p_{\{x\cap y\}}\in \mathcal{W}_m\setminus z$, respectively. Thus, the three vertices will definitely benefit from this combination.
\item If BC2 is true pairwise between three vertices $v_{i,x}$, $v_{k,y}$ and $v_{m,z}$, consider the following combination:
    \begin{equation}
    \mathcal{P} = \left\{p_{\{x\cap \mathcal{H}_k\}}, p_{\{y\cap \mathcal{H}_m\}}, p_{\{z\cap\mathcal{H}_i\}}\right\}
    \end{equation}
    From $r_i$'s perspective, it can eliminate $p_{\{z\cap\mathcal{H}_i\}}$ (because it has it). The remaining combination $\left\{p_{\{x\cap \mathcal{H}_k\}}, p_{\{y\cap \mathcal{H}_m\}}\right\}$\ignore{ If $p_{y\cap \mathcal{H}_m} \in \mathcal{H}_i$, then this combination will be instantly decodable for vertex $v_{i,x}$. If not, than there exists another vertex $v_{i,w}$ such that $p_{y\cap \mathcal{H}_m}\in t$. Thus from \lref{lem:aggregation}, this will results in the aggregation of $v_{i,x}$ and $v_{i,w}$ into one vertex $v_{i,x\cup w}$.} will definitely be either instantly decodable or aggregating (from \lref{lem:aggregation}) for $r_i$ if $p_{\{y\cap \mathcal{H}_m\}} \in x\cup\mathcal{H}_i$ or $p_{\{y\cap \mathcal{H}_m\}}\in \mathcal{W}_i\setminus x$, respectively. Similar arguments can be proven for receivers $r_k$ and $r_m$ after eliminating their Has packets $p_{\{x\cap \mathcal{H}_k\}}$ and $p_{\{y\cap \mathcal{H}_m\}}$, respectively.
\end{itemize}
$\quad$\\
Now to prove the second statement, let us first consider the case of $K=4$. Assume four vertices $v_{i,x}$, $v_{j,y}$, $v_{k,z}$ and $v_{m,w}$ forming a 4-clique because they pairwise satisfy Condition BC1. For this clique, we can always generate the following packet combination:
\begin{equation}
\mathcal{P} = \left\{p_{\{x\cap y\}}, p_{\{z\cap w\}}\right\}
\end{equation}
Clearly, the first source packet $p_{\{x\cap y\}}$ is instantly decodable for vertices $v_{i,x}$ and $v_{j,y}$ and the second source packet $p_{\{z\cap w\}}$ is instantly decodable for vertices $v_{k,z}$ and $v_{m,w}$. Thus, $\mathcal{P}$ will definitely be either instantly decodable or aggregating (from \lref{lem:aggregation}) for $r_i$ ($r_j$) if $p_{\{z\cap w\}} \in x\cup\mathcal{H}_i$ ($p_{\{z\cap w\}} \in y\cup\mathcal{H}_j$) or $p_{\{z\cap w\}}\in \mathcal{W}_i\setminus x$ ($p_{\{z\cap w\}}\in\mathcal{W}_j\setminus y$), respectively. Similarly, $\mathcal{P}$ will definitely be either instantly decodable or aggregating (from \lref{lem:aggregation}) for $r_k$ ($r_m$) if $p_{\{x\cap y\}} \in z\cup\mathcal{H}_k$ ($p_{\{x\cap y\}} \in w\cup\mathcal{H}_m$) or $p_{\{x\cap y\}}\in \mathcal{W}_k\setminus z$ ($p_{\{x\cap y\}}\in\mathcal{W}_m\setminus w$), respectively.  Thus, these vertices are guaranteed to form a proper 4-clique.

Now assume $K=5$ by adding one extra vertex $v_{n,u}$ pairwise satisfying BC1 with all the above four vertices. We may fall into the following situation:
\begin{itemize}
\item The intersection of any three packets sets of the above 5 vertices is empty (because BC1 guarantees only pairwise intersections between packet sets).
\item Every source packet in the intersection of any two packet sets is in the Lack set of at least one of the three other vertices.
\end{itemize}
In this situation, there does not exist a packet combination that can instantly decode or further aggregate all five vertices. Thus, this 5-clique is not proper. This situation may occur for any $K \geq 5$ and the second statement of the theorem follows.

\ignore{
\section{Proof of Theorem \ref{th:DC1}} \label{app:DC1*}
Assume any set of vertices are all adjacent to each other according to DC1. Let $d_{min}$ be the dimension of the one or multiple vertices having the smallest packet set sizes. If more than one vertex have dimension $d_{min}$, then all such vertices will have equal packet sets $x_{min}$ or else they will violate Condition DC1. Moreover, any other vertex with larger dimension will definitely include all the packets of $x_{min}$ in their packet sets or else they will violate Condition DC1. Thus, any source packet in $x_{min}$ or a combination of its packets will be instantly decodable for all the vertices of the clique.
}

\section{Proof of Theorem \ref{th:C1*}}\label{app:C1*}

Assume any set of vertices that are all adjacent to each other according to their pairwise satisfaction of BC1$^*$. Let $d_{min}$ be the dimension of the one or multiple vertices having the smallest dimension in this clique and let $v_{i,x_{min}}$ be one of such vertices.\\
$\quad$\\
\textbf{Case 1}: $d_{min} = 1$ \\
In this case, there exists at least one vertex $v_{i,x_{min}}$ with dimension 1, thus getting all its adjacency to all the other vertices of the clique by the first entry of \eqref{eq:C1*} when $\min\{|x|,|y|\}=1$. \ignore{Thus, the adjacency of this vertex to all other vertices of the clique becomes a special instance of \thref{th:DC1}.} Consequently, all the vertices of the clique will have $x_{min}$ (consisting of one packet) in their packet sets and thus this packet will be instantly decodable for all the vertices of this clique.\\
$\quad$\\
\textbf{Case 2}: $|y| \geq |x| \geq 2$,  $x\cap\mathcal{H}_k = \emptyset$ \\
Note that this case covers the second and third entries of \eqref{eq:C1*} when $x\setminus\mathcal{H}_k = x$.

In this case, any other vertex than $v_{i,x_{min}}$ must have at least $d_{min}-1$ source packets from $x_{min}$ in their packet sets, or else BC1$^*$ will be violated. The number of different sub-combinations of these $d_{min} -1$ source packets out to $d_min$ ones is equal to $\binom{d_{min}}{d_{min}-1} = d_{min}$ sub-combinations. Out of these $d_{min}$ possible sub-combinations, the number of sub-combinations having one arbitrary common packet $p\in x_{min}$ is equal to $\binom{d_{min}-1}{d_{min}-2} = d_{min} - 1$. Thus, there exists only one sub-combination $y =  x_{min}\setminus p$, ($|y| = d_{min}-1$) that does not have this one common packet $p$.

Now as stated above, every vertex other than $v_{i,x_{min}}$ in the formed clique will have one of the $d_{min}$ sub-combinations of size $d_{min}-1$ in its packet set. Note that if there exists more than $d_{min}$ vertices in the clique, then from the pigeonhole principle, more than one vertex will have the same combination in its packet set. Consequently, each of these vertices will either include the common packet $p$ or a the whole sub-combination $y$. Thus, the packet combination $\mathcal{P} = \{p, p'\}$ ($p'$ being any source packet in $y$) will have the following effect on all the vertices of the clique:
\begin{itemize}
\item For any vertex $v_{k,z}$ having $p \in z$, $\mathcal{P}$ will be either instantly decodable or aggregating if $p' \in\mathcal{H}_k$ or $p' \in\mathcal{W}_k$, respectively.
\item For any vertex $v_{m,w}$ having $p' \in w$, $\mathcal{P}$ will be either instantly decodable or aggregating if $p \in\mathcal{H}_m$ or $p \in\mathcal{W}_m$, respectively.
\end{itemize}
Clearly, $\mathcal{P}$ will be also instantly decodable for $v_{i,x_{min}}$. Thus, $\mathcal{P}$ is instantly decodable or aggregating to all the vertices of the clique, thus making the clique proper.\\
$\quad$\\
\textbf{Case 3}: $|y| > |x| \geq 2$, $x\cap\mathcal{H}_k \neq \emptyset$\\
Again in this case, every vertex $v_{k,y}$ other than $v_{i,x_{min}}$ in the formed clique will have one of the $d_{min}$ sub-combinations of size $d_{min}-1$, but this time these $d_{min}-1$ source packets are distributed between the Has set $\mathcal{H}_k$ of this vertex and its packet set $y$. Thus, the packet combination $\mathcal{P} = x_{min}$ (i.e. the combination of all the source packets in $x_{min}$) will have the following effect on the vertices of the clique:
\begin{itemize}
\item For any vertex $v_{k,y}$ having $|(x_{min}\setminus\mathcal{H}_k)\cap y| = |x_{min}\setminus\mathcal{H}_k|$ (which is equivalent to $x_{min}\setminus\mathcal{H}_k \subseteq y$), all the elements of $\mathcal{P}$ will be either in $\mathcal{H}_k$ and $y$. Thus, the packet combination will be instantly decodable for $y$.
\item For any vertex $v_{k,y}$ having $|(x_{min}\setminus\mathcal{H}_k)\cap y| = |x_{min}\setminus\mathcal{H}_k|-1$ (which is equivalent to $|(x_{min}\setminus\mathcal{H}_k) \setminus y| = 1$), $d_{min}-1$ elements of $\mathcal{P}$ will be either in $\mathcal{H}_k$ and $y$, leaving only one source packet in $\mathcal{P}$ outside these two sets. Thus, $\mathcal{P}$ will aggregate $v_{k,y}$ or another vertex of $r_k$.
\end{itemize}
Clearly, $\mathcal{P}$ will be also instantly decodable for $v_{i,x_{min}}$. Thus, $\mathcal{P}$ is instantly decodable or aggregating to all the vertices of the clique, thus making the clique proper.

\end{document}